\author[ ]{}
\date{\vspace{-5ex}}
\newtheorem{theorem}{Theorem}[section]
\newtheorem{proposition}[theorem]{Proposition}
\newtheorem{lemma}[theorem]{Lemma}
\newtheorem{corollary}[theorem]{Corollary}
\newtheorem{definition}[theorem]{Definition}
\newtheorem{remark}[theorem]{Remark}
\newtheorem{example}{Example}
\newcommand{\R}{\mathbb{R}}
\newcommand{\RE}{\overline {\R}}
\newcommand{\supp}{{\rm supp\,}}
\newcommand {\ND} {\mathcal{N}_{\mu_0}}
\newcommand{\TMP} {\mathbb T}
\newcommand{\BM} {\mathcal{M}}
\newcommand{\SM} {\mathcal{SM}}
\newcommand{\inv}{\dagger}
\newcommand{\STPM}{\mathcal {T}_{\mu_{0}}} 
\newcommand {\bpf} {\begin {proof}}
\newcommand {\epf} {\end {proof}}
\definecolor{Ivcolor}{RGB}{0, 153, 0}
\begin{document}

\title [SCDT]{The Signed Cumulative Distribution Transform for 1-D Signal Analysis and Classification}
\date{}
\author[Aldroubi]{Akram Aldroubi}
\address{Akram Aldroubi, Department of Mathematics, Vanderbilt University}
\email{akram.aldroubi@vanderbilt.edu}
\author[D\'iaz Mart\'in]{Roc\'io D\'iaz Mart\'in}
\address{Roc\'io D\'iaz Mart\'in, Universidad Nacional de C\'ordoba, Argentina and Instituto Argentino de Matem\'atica, CONICET, Buenos Aires, Argentina}
\email{rocio.diazmartin@unc.edu.ar}
\author[Medri]{Ivan Medri}
\address{Ivan Medri, Department of Mathematics, Vanderbilt University}
\email{ivan.v.medri@vanderbilt.edu}
\author[Rohde] {Gustavo K. Rohde}
\address{Gustavo K. Rohde, Department of Biomedical Engineering, Department of Electrical and Computer Engineering, University of Virginia}
\email{gr2z@virginia.edu}
\author[Thareja]{Sumati Thareja}
\address{Sumati Thareja, Department of Mathematics, Vanderbilt University}
\email{sumati.thareja@vanderbilt.edu}
	
	\thanks{This work is partially supported by NIH  award GM130825. }

\keywords{convexity, transport-transforms, convex groups, data analysis, classification, machine learning}
\subjclass [2010] {94A12, 94A16,  68T01, 68T10}

\maketitle

\begin{abstract}  

This paper presents a new mathematical signal transform that is especially suitable for decoding information related to non-rigid signal displacements. We provide a measure theoretic framework to extend the existing Cumulative Distribution Transform \cite{park2018cumulative} to arbitrary (signed) signals on $\RE$. We present both forward (analysis) and inverse (synthesis) formulas for the transform, and describe several of its properties including translation, scaling, convexity, linear separability and others. Finally, we describe a metric in transform space, and demonstrate the application of the transform in classifying (detecting) signals under random displacements.




\end{abstract}

\section{Introduction} 

Mathematical transforms for representing signals and images are useful tools in data-science, engineering, physics, and mathematics.  They  often render certain problems easier to solve in transform space. Fourier transforms \cite{kammler2007first} for example,  render convolution operations into multiplications in Fourier transform space, thereby simplifying the solution of linear shift-invariant systems. They are also well-suited for the detection and analysis of signals that are linear combinations of pure frequencies. Wavelet transforms, on the other hand, are well-suited for detecting and analyzing signal transients at different resolutions. In wavelet space domain, they  provide sparse representations of signals and images for compression and communication \cite{mallat1999wavelet}. Though useful in many areas of mathematics, physics and engineering, most mathematical transformation methods (e.g. Fourier, Wavelet) are linear, and thus often fail to deal with the non-linearities present in modern data science applications related to signal parameter estimation and learning-based data classification. There are several exceptions to this shortcoming. For example, the scattering transform is non-linear and has been successfully applied to machine learning applications \cite{mallat2012group}. 

Inspired by earlier work on transport metrics \cite{wang2013linear},  the Cumulative Distribution Transform (CDT) was introduced for a class of positive, piece-wise continuous, normalized functions \cite{park2018cumulative}. The CDT can be described as follows: let $s_0$ and $s$ denote two 1-dimensional piece-wise continuous functions with domains $\Omega_0$ and  $\Omega$, respectively, such that  $\int_{\Omega_0} s_0(t) dt = \int_{\Omega} s(t)dt = 1$, and such that $s_0,s > 0 $ in their respective domains. We can then relate $s$ and $s_0$ by computing a function $\widehat{s} : \Omega_0 \rightarrow \Omega$ that matches their cumulative integrals:
\begin{equation}
    \int_{\inf(\Omega_0)}^{t} s_0(u) \, du = \int_{\inf(\Omega)}^{\widehat{s}(t)} s(v) \, dv
    \label{eq:cdt_def1}
\end{equation}
The continuous function $\widehat{s}(t)$  is called the Cumulative Distribution Transform (CDT) of $s$ with respect to (some fixed) \textit{reference function} $s_0$. Equation \eqref{eq:cdt_def1} defines a mapping between, positive,  piece-wise continuous, normalized functions and the set of non-decreasing,  one-to-one functions from $\Omega$ and $\Omega_0$. The mapping is invertible on its range, and can be defined in differential form as:
\begin{equation}
    s(t) = (\hat{s}^{-1})^{\prime}(t) \,  s_0(\hat{s}^{-1}(t)), \quad \mbox{ for } \mbox{  }a.e. \mbox{ in } \Omega.
\end{equation}
Like other transforms, it is the properties of the CDT  that make it useful for signal and image data analysis. For example for the CDT, it can be shown that  for translation  $ s(t-\tau) \mapsto  \widehat{s}(y) + \tau$, and for scaling $ as(a t) \mapsto  \widehat{s}(y)/a $. In fact,  an important and unique property of the CDT is that it can represent rigid and non-rigid displacements of the independent variable (in this case $t$) as modifications of the dependent variable in transform space.

In addition, if we consider a convex set $\mathbb{H}$ of invertible mapping functions,  then the set of  signals $\{f^{\prime}(t) \, s (f(t)): \, f^{-1} \in \mathbb{H}\}$ forms a convex set in CDT space \cite{aldroubi2020partitioning}. This property allows one to solve nonlinear, non-convex, signal estimation problems, in a straightforward way, using linear least squares regression \cite{rubaiyat2020parametric}. This property also allows one to solve nonlinear classification problems using linear classifiers in signal transform space \cite{shifat2020radon}.

The CDT  can be related to the optimal transport theory of Monge \cite{park2018cumulative} and has been applied in data analysis, processing and classification problems. Its use is particularly well-suited to mine information present in signals or images when these are produced by physical or biological phenomena related to mass transport. For these cases, popular machine learning methods can be successfully used in transform space for modeling transport modes of variations in signals and images. The CDT and other similar transforms are collectively called  \emph{transport transforms} because of  their connections to Wasserstein distances and optimal transport theory (see below) \cite{wang2013linear,park2018cumulative, kolouri2016radon, kolouri2016continuous, kolouri2016sliced}. They have been used in numerous data science applications, ranging from classification of accelerometer recordings \cite{park2018cumulative}, cancer detection \cite{ozolek2014accurate,tosun2015detection}, drug discovery \cite{basu2014detecting}, knee osteoarthritis prognosis from MRIs \cite{kundu2020enabling}, Brain image analysis \cite{kundu2018discovery}, inverse problems \cite{kolouri2015,huang2020discretized}, optical communications \cite{park2018multiplexing}, particle physics \cite{cai2020linearized}, parametric signal estimation \cite{rubaiyat2020parametric}, and numerous other applications \cite{kolouri2017optimal}.

To illustrate some of the properties of the CDT, consider the task of building a hand signal interpretation system from image data. Sample hand signal images are shown in Figure \ref{fig:intro_fig}, adapted from \cite{park2018cumulative}. The goal is to build a data classification method that can automatically and accurately assign a label (sign) to a given image. Images are pre-processed so as to extract an edge map, and the $X$ and $Y$ projections of the edge maps are computed (middle of first row in Figure \ref{fig:intro_fig}). The  CDT (modified by subtracting the identity function) of the $X,Y$ projections are computed and shown in the right panel of the figure. The test data (both in signal and CDT domain) can be projected onto the most discriminant 2D subspace, computed with the P-LDA technique \cite{wang2011penalized} on training data (middle row). From this it can be seen that the test data become more clearly linearly separable when represented in CDT domain. This visual impression is confirmed by classification results, shown in the bottom table, demonstrating the test data performance using three different linear classification methods. \footnote{The CDT defined in earlier work \cite{park2018cumulative}, and as appears in Figure \ref{fig:intro_fig}, is defined as $\hat{s}(t)-t$, with $\hat{s}$ defined in equation \eqref{eq:cdt_def1}.}

\begin{figure}[!ht]
	\centering
	\includegraphics[width=0.85\linewidth]{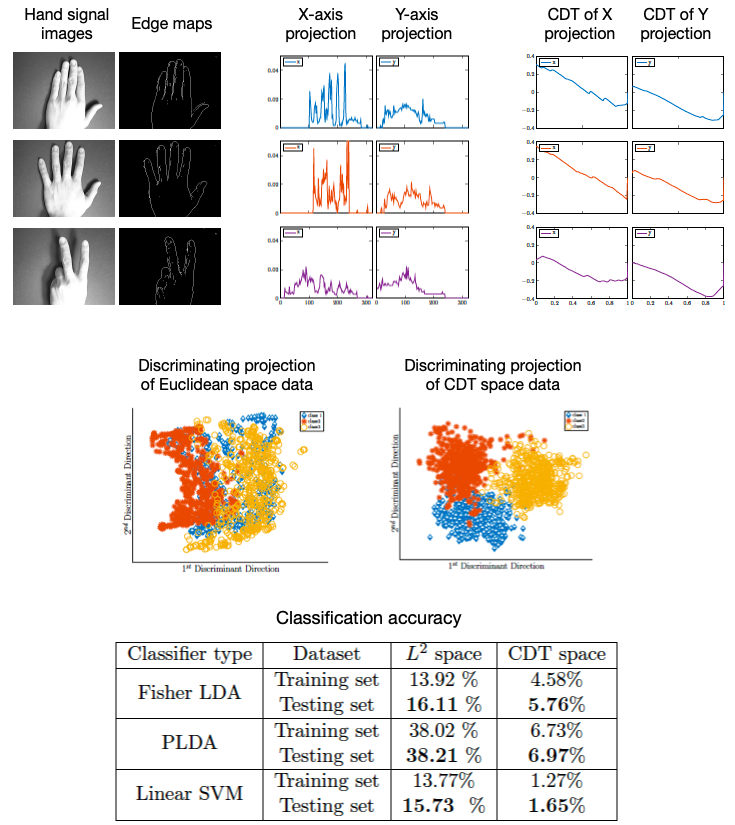}
	\caption{The (modified) Cumulative Distribution Transform (CDT) enables data representation that facilitates learning. Hand signal images are preprocessed for edge map extraction and their respective X and Y projections are computed. The X,Y projections are then transformed using the CDT. Linear classification methods are then applied to the data in CDT space, as well as in original (projection) space for comparison. The middle row displays the 2D linear discriminant embedding \cite{wang2011penalized} of test data in original signal space and transform space. Test (held out from training) data in transform space is clearly more convex and linearly separable than data in original signal space. This is confirmed by test accuracy results of 3 different linear classifiers (bottom row).} 
\label{fig:intro_fig}
\end{figure}

\subsection{Related work}
 The signal transformation described in this paper is related to the optimal transport metric (Wasserstein distance), as will be detailed below in section \ref{sec_distance}. Data analysis techniques that are based on optimal transport have gained popularity in the data science community. Machine learning methods based on optimal transport optimization have been developed in  \cite{arjovsky2017wasserstein, kolouri2019generalized}. Optimal transport methods have also been used  in image processing for image alignment \cite{haker2004optimal}, image  simulation \cite{zhu2007image}, and domain adaptation  \cite{courty2016optimal}.  Extensions of the optimal transport (Wasserstein) metric to unbalanced distributions have been proposed in \cite{thorpe2017transportation,chizat2015unbalanced}. Inspired by \cite{wang2013linear} authors in  \cite{cai2021linearized} proposed a linear optimal transport metric for arbitrary (signed, unbalanced) signals. We also note a generalization of the Wasserstein distance to signed signals, similar to the one described in section \ref{Props&Apps}, has been previously described in \cite{engquist2016optimal, seismic}. 

\subsection{Contributions}
Although useful in many settings, the CDT framework described above has the limitation that the signals themselves must be positive for the entirety of their domain. While not prohibitive in certain settings \cite{rubaiyat2020parametric}, this limitation can hinder the application of the transform to signed functions that can be used to model more general signals and data.

Here we extend the CDT, originally designed for positive probability density functions, to general finite signed measures with no requirements on their total mass. For this reason, we name the new 1D signal transform introduced here as the \emph{signed cumulative distribution transform} (SCDT). We define mathematical formulas for both the forward (analysis) and the inverse (synthesis) transformations. We describe some of the properties of the SCDT, including translation, scaling and composition. Through the use of an extended generative model, we also describe necessary and sufficient conditions whereby signal classes will be convex (and thus separable by a linear classifier). Finally, we  define a distance based on this transform (a version of the distance appears earlier in \cite{engquist2016optimal, seismic}), and demonstrate its application in signal data analysis and classification tasks. Python source code implementing the new transform is available through the PyTransKit package \cite{Codes3}.

{\color{red} \subsection {outline}}

\section{Cumulative Distribution Transforms for Measures} \label{TFM}
In this section we extend the Cumulative Distribution Transform (CDT) to Radon measures on the extended real line $\RE$. Because we use concepts related to the theory of transport, we start with an extension of CDT to probability measures. We then extend the CDT to include non-negative finite measures, and finally signed measures.  

In the cumulative distribution transform for measures, a reference measure $\mu_0$ is fixed, and the transform of a measure $\nu$ relative to this fixed reference measure $\mu_0$ will be a non-decreasing function on $\RE$ denoted by $\widehat{\nu}$. 

\subsection{The cumulative distribution transform on probability measures}
Given any probability measure $\eta$ on $\RE$, its Cumulative Distribution Function (CDF) is the function $F_\eta:\RE\to \RE$ given by 
\begin{equation} \label{CD}
    F_\eta(x)=\eta([-\infty,x]).
\end{equation}
The function $F_\eta$ is non-decreasing and  right continuous on $\RE$, and may not be invertible. However, a generalized inverse for such a function (in fact for any function) can be defined (see e.g., \cite{Embrechts:13, thorpeNotes}). 
\begin{definition} \label {MGI}
 For a function $F:\RE \to \RE$, the monotone generalized inverse of $F$ is the function $F^{\inv}:\RE\to\RE$  defined as
    \begin{equation*}
        F^{\inv}(y)  = \inf\{x \in \RE: \, F(x)>y\} .
    \end{equation*}
    In particular (since $\inf \emptyset = \infty$),  $F^{\inv}(\infty) = \infty.$ 
\end{definition}

The monotone generalized inverse $F^\inv$ has some remarkable properties. In particular, for any function $F$, $F^\inv$ is a non-decreasing  function on $\RE$, and  if   $F$ is continuous and strictly increasing,  then $F^\inv$ and its standard inverse coincide (see Appendix \ref{appendix}). Since the monotone generalized inverses are functions on $\RE$, our measures will be defined on $\RE$ instead of $\R$. Those measures are characterized by their CDFs (which are functions on $\RE$) according to Proposition \ref{characterization_meausres_on_RE}. 

Given a reference measure $\mu_0$,  the cumulative distribution transform $\widehat{\nu}$ of a measure $\nu$ with respect to $\mu_0$ is defined as
\begin{equation} \label {vhat}
		\widehat{\nu}=(F_\nu)^\inv \circ F_{\mu_0},
\end{equation}
	where $F_{\mu_0}$ and $F_\nu$ are the CDFs of the respective measures. Under the assumption that $\mu_0$ does not give mass to atoms, the function $\eqref{vhat}$ is the solution of the 1D optimal transport problem of Monge (see for example \cite{thorpeNotes, villani2003topics}).   In fact, the measure $\nu$   can be recovered by the push-forward $\widehat\nu_\#\mu_0$ of the measure $\mu_0$ by the function $\widehat \nu$ in \eqref {vhat}  (see Theorem \ref{Prob_bijection}), i.e., for any Borel measurable set $E\subset \RE$
	
	\begin{equation} \label {PF}
	\nu(E)=\widehat\nu_\#\mu_0(E):=\mu_0\big(\widehat \nu^{-1}(E)\big).
    \end{equation}
    
If the reference measure $\mu_0$ and the target measure $\nu$ are continuous with respect to the Lebesgue measure with densities $s_0$ and $s$, respectively, then equation \eqref{eq:cdt_def1} can be rewritten as follows
    $$F_{\mu_0}=F_\nu(\widehat{\nu})$$
Therefore, equation \eqref{vhat} extends the definition  of the CDT  for functions as described by \eqref{eq:cdt_def1} to the case of  probability measures on $\RE$.

Writing \eqref{vhat} in operator notation as $\widehat{\nu}=\TMP_{\mu_0}(\nu)$, we define the CDT operator by
\begin{equation} \label{TOPM}
\TMP_{\mu_0} \; :\mathcal{P}(\RE) \to \ND 
\end{equation}
where $\mathcal{P}(\RE)$ is the set of probability measures on $\RE$ and 
\begin{equation}\label{TOPMR}
\ND:=\{f:\RE\to\RE \text{ non-decreasing } \mu_0 \text{-}a.e \}
\end{equation}
is the set of non-decreasing functions $a.e. \, $  with respect to $\mu_0$.
The next theorem states that the operator $\TMP_{\mu_0}$ above is a bijection and hence can be viewed as a transform operator.
\begin{theorem}\label{Prob_bijection}
    If $\mu_0$ is a probability measure on $\RE$ that does not give mass to atoms, then the operator  $\TMP_{\mu_0} :\mathcal{P}(\RE) \to \ND$ in \eqref{TOPM} is a bijection. 
    Moreover, if $\varphi \in \ND$, then $\TMP_{\mu_0}^{-1}(\varphi)=\varphi_\#\mu_0$ (see \eqref {PF}). 
\end{theorem}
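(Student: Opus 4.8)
The plan is to prove bijectivity by exhibiting the pushforward map $\Psi:\ND\to\mathcal{P}(\RE)$, $\Psi(\varphi)=\varphi_\#\mu_0$, as a two-sided inverse of $\TMP_{\mu_0}$; a map admitting a two-sided inverse is a bijection, and this simultaneously yields the ``Moreover'' claim $\TMP_{\mu_0}^{-1}=\Psi$. Here I read $\ND$ modulo modification on $\mu_0$-null sets, which is forced: two functions agreeing $\mu_0$-a.e. have identical pushforwards, so injectivity can only hold after this identification. As preliminary bookkeeping I would first check well-definedness of both arrows: $\widehat{\nu}=F_\nu^{\inv}\circ F_{\mu_0}$ is a composition of non-decreasing functions (the outer one by the stated properties of the monotone generalized inverse, Appendix \ref{appendix}), so it lies in $\ND$; and $\varphi_\#\mu_0$ is a probability measure on $\RE$, so $\Psi$ lands in $\mathcal{P}(\RE)$.

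The computation rests on one elementary fact about the convention $F^{\inv}(y)=\inf\{x:F(x)>y\}$, namely the sandwich $F(x)>y \Rightarrow F^{\inv}(y)\le x \Rightarrow F(x)\ge y$, which I would verify from the definition using right-continuity of $F$. From this I extract two lemmas. Lemma~A (probability integral transform): since $\mu_0$ gives no mass to atoms, $F_{\mu_0}$ is continuous with range $[0,1]$, and $(F_{\mu_0})_\#\mu_0=\mathcal{L}$, the Lebesgue measure on $[0,1]$; this is where atomlessness is used, as it collapses the sandwich to an equality and makes $\mu_0(\{F_{\mu_0}\le c\})=\mu_0(\{F_{\mu_0}<c\})=c$. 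Lemma~B (quantile identity): for any $\nu\in\mathcal{P}(\RE)$, $(F_\nu^{\inv})_\#\mathcal{L}=\nu$, proved by computing the CDF of the left-hand side at $x$ as $\mathcal{L}(\{y:F_\nu^{\inv}(y)\le x\})$ and sandwiching it between $\mathcal{L}([0,F_\nu(x)))=F_\nu(x)$ and $\mathcal{L}([0,F_\nu(x)])=F_\nu(x)$, then invoking Proposition~\ref{characterization_meausres_on_RE} that a measure is determined by its CDF.

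Reconstruction ($\Psi\circ\TMP_{\mu_0}=\id$) is then immediate by composing the lemmas: $\widehat{\nu}_\#\mu_0=(F_\nu^{\inv})_\#(F_{\mu_0})_\#\mu_0=(F_\nu^{\inv})_\#\mathcal{L}=\nu$, which is exactly \eqref{PF}. This already forces injectivity of $\TMP_{\mu_0}$, since $\widehat{\nu_1}=\widehat{\nu_2}$ $\mu_0$-a.e.\ implies, after applying $\Psi$, that $\nu_1=\nu_2$. For the remaining identity $\TMP_{\mu_0}\circ\Psi=\id$ (equivalently, surjectivity), I would take $\varphi\in\ND$, set $\nu=\varphi_\#\mu_0$, and observe that by the reconstruction step applied to $\nu$ both $\varphi$ and $\widehat{\nu}=F_\nu^{\inv}\circ F_{\mu_0}$ are non-decreasing and push $\mu_0$ forward to $\nu$; uniqueness of the monotone transport map out of an atomless $\mu_0$ then gives $\varphi=\widehat{\nu}$ $\mu_0$-a.e. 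Combining the two identities finishes the proof and identifies the inverse.

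I expect the main obstacle to be this last step, the \emph{uniqueness of the monotone map}, together with the careful handling of the generalized inverse at atoms of $\nu$ (jumps of $F_\nu$) and at flat parts of $\varphi$. I would prove uniqueness directly rather than cite it: for any non-decreasing $g$ with $g_\#\mu_0=\nu$, the level set $\{g\le x\}$ is a down-set, so $F_\nu(x)=\mu_0(\{g\le x\})=F_{\mu_0}(c(x))$ with $c(x)=\sup\{t:g(t)\le x\}$ (atomlessness removing endpoint ambiguities), and feeding this into $F_\nu^{\inv}$ recovers $g=F_\nu^{\inv}\circ F_{\mu_0}$ off a $\mu_0$-null set. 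The recurring subtlety throughout is that the generalized-inverse sandwich is only a one-sided implication in general; the hypothesis that $\mu_0$ is non-atomic is precisely what turns every relevant inequality into an $\mu_0$-a.e.\ equality.
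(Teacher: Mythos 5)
Your proposal is correct, and its overall architecture coincides with the paper's: the paper likewise establishes $(F_{\mu_0})_\#\mu_0=\mu_{Leb}|_{[0,1]}$ (your Lemma~A is its Lemma~\ref{Lema_aux1}), $(F_\nu^\inv)_\#\mu_{Leb}|_{[0,1]}=\nu$ via the set identity of Proposition~\ref{geninv&Lebesgue2} (your Lemma~B, packaged inside Lemma~\ref{lemma_pushforward}), deduces injectivity from the reconstruction $\widehat\nu_\#\mu_0=\nu$, and obtains surjectivity from uniqueness of the non-decreasing map pushing $\mu_0$ to a given $\nu$. The one place you genuinely diverge is the proof of that uniqueness statement. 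The paper's Lemma~\ref{OTNOtes+Ambrosio} follows the Ambrosio/Santambrogio route: it shows $\phi\ge F_\nu^\inv\circ F_{\mu_0}$ off a countable subset of $\supp\mu_0$ and then concludes $\mu_0$-a.e.\ equality because $\int(\phi-F_\nu^\inv\circ F_{\mu_0})\,d\mu_0=\int y\,d\nu-\int y\,d\nu=0$. You instead propose a direct level-set computation: for non-decreasing $g$ with $g_\#\mu_0=\nu$, the set $\{g\le x\}$ is a down-set, atomlessness gives $F_\nu(x)=F_{\mu_0}(c(x))$ with $c(x)=\sup\{t:g(t)\le x\}$, and applying $F_\nu^\inv$ recovers $g$ a.e. Your route has the advantage of avoiding the integration step entirely, which is not a cosmetic point here: since the measures live on $\RE$ and may charge $\pm\infty$ or have infinite first moment, the quantity $\int y\,d\nu$ need not be finite, so the paper's cancellation argument requires an additional reduction that is not spelled out, whereas your argument is insensitive to this. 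The price is that your final step (``feeding $F_\nu(x)=F_{\mu_0}(c(x))$ into $F_\nu^\inv$ recovers $g$'') is the delicate one and is left schematic; it needs the observation that the flat intervals of $F_{\mu_0}$ and the set where $g$ is discontinuous or where $c$ fails to invert $g$ are $\mu_0$-negligible, which is exactly the kind of bookkeeping the generalized-inverse propositions in the Appendix (e.g.\ Propositions~\ref{basic}, \ref{non-dec}, \ref{non-dec&rightcts}) are designed to handle. With that step written out, your proof is complete and, on $\RE$, arguably cleaner than the one in the paper.
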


A measure $\eta$ defined on a Borel measurable subset   $\Omega\subseteq\RE$ can be considered as a measure on $\RE$ by extending it by zero. Equivalently, this extension can be written as the  push-forward $\iota_{\#} \eta$ of $\eta$  by the inclusion map $\iota:\Omega\to \RE$ ($\iota_{\#} \eta\in\mathcal{P}(\RE)$). Thus, using this extension, we consider the set of probability measures $\mathcal{P}(\Omega)$ defined on $\Omega$ as a subset of $\mathcal{P}(\RE)$, i.e., $\mathcal{P}(\Omega)\subseteq \mathcal{P}(\RE)$. Analogously, we will say that any measure $\eta\in \mathcal{P}(\RE)$ satisfying $\eta(\Omega)=1$ and $\eta(\Omega^c)=0$ belongs to $\mathcal{P}(\Omega)$ by considering its restriction. From these considerations, we obtain the following corollary of Theorem \ref {Prob_bijection}.

\begin{corollary} \label{prob_restrict} Let $\Omega_0,\Omega$ be two Borel sets in $\RE$ and $\mu_0$ be a probability measure on  $\Omega_0$ that does not give mass to atoms. Then the restriction $\TMP_{\mu_0}|_{\mathcal{P}(\Omega)}$ of the transform operator $\TMP_{\mu_0}$ to $\mathcal{P}(\Omega)$  is a bijection from $\mathcal{P}(\Omega)$ to the set $\ND(\Omega):=\{f \in \ND : \,  f(x) \in \Omega \ \mu_0 \text{-} a.e. \text{ for } x\in \RE \}$ (see \eqref{TOPM} and \eqref{TOPMR}).
\end{corollary}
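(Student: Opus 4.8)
The plan is to obtain the corollary as a purely set-theoretic consequence of Theorem \ref{Prob_bijection}, without re-proving bijectivity. Extended by zero, $\mu_0$ is a probability measure on $\RE$ that gives no mass to atoms, so Theorem \ref{Prob_bijection} already supplies the global bijection $\TMP_{\mu_0}:\mathcal{P}(\RE)\to\ND$ together with its inverse $\varphi\mapsto\varphi_\#\mu_0$; the set $\Omega_0$ enters only as the domain that makes $\mu_0$ an admissible reference. I would first record the elementary fact that a bijection $T:X\to Y$ restricts to a bijection from $A\subseteq X$ onto $B\subseteq Y$ whenever the biconditional $x\in A \Longleftrightarrow T(x)\in B$ holds for all $x\in X$: injectivity is inherited from $T$, the forward implication yields $T(A)\subseteq B$, and for surjectivity each $y\in B$ is $T(x)$ for a unique $x\in X$, which lies in $A$ by the reverse implication. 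With $X=\mathcal{P}(\RE)$, $Y=\ND$, $A=\mathcal{P}(\Omega)$ and $B=\ND(\Omega)$, the entire statement collapses to verifying this one biconditional.

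To check it, I would fix $\nu\in\mathcal{P}(\RE)$, set $\widehat\nu=\TMP_{\mu_0}(\nu)$, and use the push-forward recovery $\nu=\widehat\nu_\#\mu_0$ from \eqref{PF}. Evaluating on the Borel set $\Omega^c$ gives
$$\nu(\Omega^c)=\widehat\nu_\#\mu_0(\Omega^c)=\mu_0\big(\widehat\nu^{-1}(\Omega^c)\big)=\mu_0\big(\{x\in\RE:\,\widehat\nu(x)\notin\Omega\}\big).$$
Because $\nu$ is a probability measure, $\nu\in\mathcal{P}(\Omega)$ amounts to $\nu(\Omega^c)=0$, which by the display is equivalent to $\mu_0(\{x:\widehat\nu(x)\notin\Omega\})=0$, that is, to $\widehat\nu(x)\in\Omega$ for $\mu_0$-a.e.\ $x$. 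Since $\widehat\nu\in\ND$ holds automatically, this is exactly the assertion $\widehat\nu\in\ND(\Omega)$, completing the biconditional $\nu\in\mathcal{P}(\Omega)\Longleftrightarrow\widehat\nu\in\ND(\Omega)$ and hence the corollary.

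I expect the only delicate point to be the legitimacy of the push-forward evaluation, namely that $\widehat\nu^{-1}(\Omega^c)$ is $\mu_0$-measurable so that \eqref{PF} may be invoked with $E=\Omega^c$. This should be routine: $\Omega^c$ is Borel since $\Omega$ is, and $\widehat\nu$, being non-decreasing (as a composition of the monotone generalized inverse with a CDF), is Borel measurable, so the preimage is Borel. The mild friction that elements of $\ND$ are only required to be non-decreasing $\mu_0$-a.e.\ is irrelevant here, since the specific $\widehat\nu=\TMP_{\mu_0}(\nu)$ is genuinely monotone; and for the surjectivity half the push-forward $\varphi_\#\mu_0$ of an arbitrary $\varphi\in\ND(\Omega)$ is already furnished as a well-defined measure by Theorem \ref{Prob_bijection}. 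Once this bookkeeping is noted, nothing further is required.
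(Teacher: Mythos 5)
Your proof is correct and follows essentially the same route as the paper: both reduce the corollary to Theorem~\ref{Prob_bijection} and verify membership via the push-forward identity $\nu=\widehat\nu_\#\mu_0$ evaluated on sets determined by $\Omega$. The only difference is cosmetic --- you package the argument as a single biconditional, which also makes explicit the forward inclusion $\TMP_{\mu_0}(\mathcal{P}(\Omega))\subseteq\ND(\Omega)$ that the paper leaves implicit.
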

\begin{remark} \label{NO0O}
In the corollary above, since $f\in \ND(\Omega)$ is defined $\mu_0$-$a.e.$ and $\mu_0(\Omega_0^c) = 0$,  we can consider  the image of the transform operator to be the set 
$$\ND(\Omega_0,\Omega):=\{f:\Omega_0\to \Omega \text{ non-decreasing } \mu_0\text{-}a.e. \},$$
by considering the restriction $\TMP_{\mu_0}(\nu)_{|_{\Omega_0}} = (F_\nu ^\inv \circ F_{\mu_0})_{|_{\Omega_0}}$ for every $\nu \in \mathcal P(\Omega)$.
\end{remark}

\begin{example}
This extension of the CDT to a transform for probability measures allows to consider the transform of a delta-function. Let $\mu_0$ be the Lebesgue measure on $[0,1]$ (that is, consider the uniform distribution). Let $\nu=\delta_0$ be the Dirac measure concentrated on $\{0\}$. 
On the one hand, $F_{\mu_0}$ is zero on $[-\infty,0]$, the identity function on $(0,1)$ and $1$ on $[1,\infty]$, 
and 
$F_{\delta_0}$ is the Heaviside step function $H$ (that is, the characteristic function of $[0,\infty]$).
Therefore,
$$\widehat{\delta_0}(x)=
\begin{cases}
H^\inv(0) & \text{ if } x\leq 0\\
H^\inv(x) & \text{ if } x\in (0,1)\\
H^\inv(1) & \text{ if } x\geq 1
\end{cases}$$
where
$$H^\inv(0)=\inf\{x: H(x)>0\}=-\infty, \qquad H^\inv(1)=\inf\{x: H(x)>1\}=\infty$$
and
$$H^\inv(y)=\inf\{x: H(x)>y\}=0 \qquad \forall y\in (0,1)$$
Thus,
\begin{equation}\label{delta_hat}
    \widehat{\delta_0}(x)=
\begin{cases}
-\infty & \text{ if } x\leq 0\\
0 & \text{ if } x\in (0,1)\\
\infty & \text{ if } x\geq 1
\end{cases}
\end{equation}
In particular, $\widehat{\delta_0}(x)=0$, a.e.-$\mu_0$.
On the other hand, the push-forward of $\mu_0$ by the zero function 
$$0_\#\mu_0(E)=\mu_0\left(\{x:\, 0\in E\}\right)=\begin{cases}
\mu_0(\RE) & \text{ if } 0\in E\\
\mu(\emptyset) & \text{ if } 0\not\in E 
\end{cases}=\begin{cases}
1 & \text{ if } 0\in E\\
0 & \text{ if } 0\notin E
\end{cases}=\delta_0(E)$$
for every  Borel set $E\subset \RE$. Thus, $\delta_0=0_\#\mu_0$.
\end{example}

\subsection{Transform for positive finite measures with arbitrary total mass}

The CDT for probability measures can be extended naturally to the case where the reference and the target  are finite and positive Borel measures $\BM(\RE)$. Specifically, let $\mu_0,\nu$ be two finite positive measures such that $\mu_0$ is non-trivial, that is
$$0<\|\mu_0\|=\mu_0(\RE)<\infty \qquad \text{ and } \qquad  \|\nu\|=\nu(\RE)<\infty$$
If $M_\alpha$ denotes the scaling function by the factor $\alpha$ (i.e., $M_\alpha(x)=\alpha x $), then the transform of $\nu$ with respect to the reference $\mu_0$ is defined to be 
 \begin{equation} \label{PMT}
     \widehat{\nu} = \begin{cases}
                    \left(\nu^\star, \|\nu\|\right), \quad  &\nu\ne 0,\\
                                    \\
                     \left(0, 0\right), \quad &\nu= 0,
          \end{cases}
\end{equation}
and where
  \[
  \nu^\star=F_{\nu}^\inv\circ M_{\|\nu\|}\circ M_{\frac 1 {\|\mu_0\|}}\circ F_{\mu_0}.
  \]

 When $\nu\ne 0$, the non-decreasing function $\nu^*$ is simply the CDT of the probability measure $\frac {1} {\|\nu\|}\nu$ with respect to the reference $\frac {1} {\|\mu_0\|}\mu_0$ (see \eqref {vhat}), while the number $\|\nu\|$ is to keep track of the total mass of $\nu$. The next theorem shows that \eqref {PMT} gives rise to a bijection and hence can be thought of as a transform.  We abuse language and still call this transform the CDT since it will be clear from the context which transform is being used. Abusing notation by using  $\TMP_{\mu_0}$ again to denote the operator defined by $ \TMP_{\mu_0}(\nu)=\widehat \nu$ we get the following bijection between the space of  finite Borel measures $\BM(\RE)$ and the set 
 \begin{equation}
 \label {TPMAMR}
 \STPM:=\Big(\ND\times \R^+\Big)\cup \{(0,0)\},
 \end{equation}
where $\ND$ is as in \eqref{TOPMR},  and $\R^+=(0,\infty)$.

\begin{theorem}\label{bij_non-nomnalized}
If $\mu_0$ is a non-trivial finite positive measure on $\RE$ that does not give mass to atoms, then the  operator $\TMP_{\mu_0}$ is a bijection from $\mathcal{M}(\RE)$ to $\STPM$, whose inverse  is given by 
    \begin{equation}\label{inv_positive}
    (f,r)\mapsto rf_\# \Big(\frac{\mu_0}{\|\mu_0\|}\Big)\qquad \text {for } (f,r)\ne (0,0),        
    \end{equation}
    and  the inverse of $(0,0)$ is the zero measure.
\end{theorem}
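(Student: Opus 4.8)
The plan is to reduce everything to the probability-measure case already settled in Theorem \ref{Prob_bijection}. The zero measure is dealt with by fiat, since $\TMP_{\mu_0}(0)=(0,0)$ and the stated inverse sends $(0,0)$ back to $0$; so it remains to show that $\TMP_{\mu_0}$ restricts to a bijection from the nonzero measures $\mathcal M(\RE)\setminus\{0\}$ onto $\ND\times\R^+$. The starting observation is that normalization gives a bijection $\mathcal M(\RE)\setminus\{0\}\longleftrightarrow \mathcal P(\RE)\times\R^+$ via $\nu\mapsto(\wi\nu,\|\nu\|)$, where $\wi\nu:=\nu/\|\nu\|$, with inverse $(\eta,r)\mapsto r\eta$; this is immediate because $\|\nu\|\in\R^+$ for $\nu\ne 0$ and because every $\eta\in\mathcal P(\RE)$ has $\|\eta\|=1$. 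Likewise set $\wi\mu_0:=\mu_0/\|\mu_0\|$, which is a probability measure that does not charge atoms (it has exactly the same null sets as $\mu_0$), so Theorem \ref{Prob_bijection} applies with $\wi\mu_0$ as reference.

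The computational heart is to identify the first coordinate $\nu^\star$ of \eqref{PMT} with the probability CDT of $\wi\nu$ relative to $\wi\mu_0$. Since $F_{\wi\nu}=M_{1/\|\nu\|}\circ F_\nu$ and $F_{\wi\mu_0}=M_{1/\|\mu_0\|}\circ F_{\mu_0}$, I would invoke the elementary scaling rule for the monotone generalized inverse,
\begin{equation*}
\big(M_{1/\|\nu\|}\circ F_\nu\big)^{\inv}(y)=\inf\{x:\tfrac{1}{\|\nu\|}F_\nu(x)>y\}=\inf\{x:F_\nu(x)>\|\nu\|\,y\}=F_\nu^{\inv}\circ M_{\|\nu\|}(y),
\end{equation*}
valid for $\|\nu\|>0$. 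Combining these gives
\begin{equation*}
\TMP_{\wi\mu_0}(\wi\nu)=(F_{\wi\nu})^{\inv}\circ F_{\wi\mu_0}=F_\nu^{\inv}\circ M_{\|\nu\|}\circ M_{1/\|\mu_0\|}\circ F_{\mu_0}=\nu^\star,
\end{equation*}
exactly the expression in \eqref{PMT}. In particular $\nu^\star\in\ND$ by Theorem \ref{Prob_bijection}, so $\TMP_{\mu_0}$ genuinely maps into $\ND\times\R^+$, and on nonzero measures it factors as the composition $\nu\mapsto(\wi\nu,\|\nu\|)\mapsto(\TMP_{\wi\mu_0}(\wi\nu),\|\nu\|)$.

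It remains to assemble the bijection. Both stages of this composition are bijections --- the first by the normalization correspondence above, the second because $\TMP_{\wi\mu_0}\times\id_{\R^+}:\mathcal P(\RE)\times\R^+\to\ND\times\R^+$ is a bijection by Theorem \ref{Prob_bijection} --- so their composite $\TMP_{\mu_0}:\mathcal M(\RE)\setminus\{0\}\to\ND\times\R^+$ is a bijection, and adjoining the pair $0\leftrightarrow(0,0)$ gives a bijection onto $\STPM$. For the inverse formula, given $(f,r)\in\ND\times\R^+$, running the composition backwards yields $\wi\nu=\TMP_{\wi\mu_0}^{-1}(f)=f_\#\wi\mu_0=f_\#(\mu_0/\|\mu_0\|)$ by Theorem \ref{Prob_bijection}, and then $\nu=r\,\wi\nu=r\,f_\#(\mu_0/\|\mu_0\|)$, matching \eqref{inv_positive}; here one uses that $f_\#(\mu_0/\|\mu_0\|)$ is a probability measure, since push-forward preserves total mass, so that the recovered norm $\|\nu\|=r$ is consistent. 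I expect the only real friction to be the bookkeeping in the generalized-inverse scaling identity and the check that $\wi\mu_0$ inherits the no-atoms hypothesis; everything else is a formal consequence of Theorem \ref{Prob_bijection}.
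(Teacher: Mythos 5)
Your proposal is correct and follows essentially the same route as the paper: reduce to the probability-measure case by normalizing $\nu$ and $\mu_0$ and invoke Theorem \ref{Prob_bijection} (the paper phrases this as separate injectivity/surjectivity arguments rather than a composition of two bijections, but the substance is identical). The only addition is that you explicitly verify, via the scaling rule for the monotone generalized inverse, that $\nu^\star$ in \eqref{PMT} coincides with the probability CDT of $\nu/\|\nu\|$ relative to $\mu_0/\|\mu_0\|$ --- a fact the paper asserts without proof in the text following \eqref{PMT} --- which is a harmless and correct refinement.
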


Defining $\STPM (\Omega):=\Big(\ND(\Omega)\times \R^+\Big)\cup \{(0,0)\}$, we get the following corollary.

\begin{corollary} \label{FMsub}
Let $\Omega_0,\Omega$ be two Borel sets in $\RE$ and $\mu_0$ be a non-trivial finite positive Borel  measure on $\Omega_0$ that does not give mass to atoms. Then the CDT with respect to $\mu_0$ is a bijection from $\mathcal{M}(\Omega)$ (the set of positive finite measures on $\Omega$) to  $\STPM (\Omega)$.
\end{corollary}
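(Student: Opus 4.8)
The plan is to obtain this corollary from Theorem \ref{bij_non-nomnalized} in exactly the way Corollary \ref{prob_restrict} was obtained from Theorem \ref{Prob_bijection}. I view $\mathcal{M}(\Omega)$ as the subset of $\mathcal{M}(\RE)$ consisting of those finite positive measures $\nu$ with $\nu(\Omega^c)=0$ (extension by zero), and I restrict the global bijection $\TMP_{\mu_0}:\mathcal{M}(\RE)\to\STPM$ from Theorem \ref{bij_non-nomnalized} to this subset. Injectivity is then immediate, since the restriction of an injective map is injective. The entire content of the corollary is therefore the identification of the image of this restriction, i.e.\ the set equality $\TMP_{\mu_0}(\mathcal{M}(\Omega))=\STPM(\Omega)$, which I would prove by establishing the two inclusions separately.

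For the inclusion $\TMP_{\mu_0}(\mathcal{M}(\Omega))\subseteq\STPM(\Omega)$, take $\nu\in\mathcal{M}(\Omega)$. If $\nu=0$ then $\widehat\nu=(0,0)\in\STPM(\Omega)$ by \eqref{PMT}, so assume $\nu\ne0$. By \eqref{PMT} we have $\widehat\nu=(\nu^\star,\|\nu\|)$, where $\nu^\star$ is precisely the probability-measure CDT of $\tfrac{1}{\|\nu\|}\nu$ with respect to the probability reference $\tfrac{1}{\|\mu_0\|}\mu_0$. Since $\tfrac{1}{\|\nu\|}\nu\in\mathcal{P}(\Omega)$ and $\tfrac{1}{\|\mu_0\|}\mu_0$ is a non-atomic probability measure on $\Omega_0$, Corollary \ref{prob_restrict} (and Remark \ref{NO0O}) gives $\nu^\star\in\ND(\Omega)$; here I would note that rescaling $\mu_0$ by a positive constant does not change its null sets, so the $\mu_0$-a.e.\ condition defining $\ND(\Omega)$ is the same whether computed against $\mu_0$ or $\tfrac{1}{\|\mu_0\|}\mu_0$. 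As $\|\nu\|\in\R^+$, we conclude $\widehat\nu\in\ND(\Omega)\times\R^+\subseteq\STPM(\Omega)$.

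For the reverse inclusion it is cleanest to use the explicit inverse \eqref{inv_positive}. Given $(f,r)\in\STPM(\Omega)$ with $(f,r)\ne(0,0)$, we have $f\in\ND(\Omega)$ and $r\in\R^+$, and the candidate preimage is $\nu:=r\,f_\#\big(\tfrac{\mu_0}{\|\mu_0\|}\big)$. By Theorem \ref{bij_non-nomnalized} this is a finite positive measure with $\TMP_{\mu_0}(\nu)=(f,r)$, so it remains only to check that $\nu\in\mathcal{M}(\Omega)$, i.e.\ $\nu(\Omega^c)=0$. Because $f\in\ND(\Omega)$, we have $f(x)\in\Omega$ for $\mu_0$-a.e.\ $x$, so the set $\{x:f(x)\in\Omega^c\}$ is $\mu_0$-null; by the definition of push-forward this forces $f_\#\big(\tfrac{\mu_0}{\|\mu_0\|}\big)(\Omega^c)=0$ and hence $\nu(\Omega^c)=0$. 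The pair $(0,0)$ maps to the zero measure, which trivially lies in $\mathcal{M}(\Omega)$. Combining the two inclusions with injectivity yields the claimed bijection.

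I do not expect a serious obstacle, as the argument is essentially bookkeeping once Theorem \ref{bij_non-nomnalized} and Corollary \ref{prob_restrict} are available. The only points requiring genuine care are the identification of $\mathcal{M}(\Omega)$ as a subset of $\mathcal{M}(\RE)$ via extension by zero, and the matching verification that the $\mu_0$-a.e.\ membership condition defining $\ND(\Omega)$ transfers correctly through the push-forward into a statement about where the reconstructed measure is supported. This is the step where one must confirm that the ``almost everywhere'' qualifiers on the transform side and on the measure side are compatible, mirroring the normalization subtlety already handled at the probability level.
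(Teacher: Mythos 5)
Your argument is correct and is exactly what the paper intends: its proof of Corollary \ref{FMsub} simply says it ``follows from Theorem \ref{bij_non-nomnalized} as in the proof of Corollary \ref{prob_restrict},'' and your write-up fleshes out precisely that restriction-of-the-bijection argument, including the push-forward verification that $f\in\ND(\Omega)$ forces the reconstructed measure to be concentrated on $\Omega$. No gaps.
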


As in Remark \ref {NO0O},  instead of $\mathcal{N}_{\mu_0}(\Omega)$ in the first component of the image of the transform, we could consider directly the set $\mathcal{N}_{\mu_0}(\Omega_0,\Omega)$ 
by considering restrictions: $(\nu^\star)_{|_{\Omega_0}}$ of $\nu^\star$ in \eqref{PMT}.

\subsection{Transform for signed measures} 
To define the transform on signed measures, we use  the Jordan decomposition of a signed measure \cite{royden:10}. Specifically,  
    let $\nu$ be a signed, finite Borel measure on $\RE$, with Jordan decomposition $\nu=\nu^+ - \nu^-$, then the image $\TMP_{\mu_0}(\nu)$ of  $\nu$ with respect to a non-trivial positive measure $\mu_0$ is defined as
    \begin{equation}\label{one reference}
        \TMP_{\mu_0}(\nu) = \left(\TMP_{\mu_0}(\nu^+), \TMP_{\mu_0}(\nu^-)\right)
    \end{equation}
where $\TMP_{\mu_0}(\nu^\pm)$ are the CDTs of positive finite measures defined in \eqref {PMT}. In particular,  $\TMP_{\mu_0}(\nu) \in \STPM^2$ (see \eqref{TPMAMR}). Denoting the set of finite signed measures on $\RE$ by $\SM(\RE)$, and defining
 \begin{equation} \label{IMU}
        \mathcal{I}_{\mu_0}:=\{(f,r,g,s)\in \STPM^2:  \,     f_\#\mu_0 \perp g_\#\mu_0 \, \text { for } r,s\in\R^+\}
    \end{equation}
where $f_\#\mu_0 \perp g_\#\mu_0$ denotes that $f_\#\mu_0$ and $g_\#\mu_0$ are mutually singular, we have the following theorem. 

\begin{theorem}\label{bijection_signed_measure}
 If $\mu_0$ is a non-trivial finite positive measure on $\RE$ that does not give mass to atoms, then the operator  $\TMP_{\mu_0}:\mathcal{SM}(\RE) \to \mathcal{I}_{\mu_0}$ given in \eqref{one reference} is a bijection. Hence, $\TMP_{\mu_0}$ is a transform.
 
    Moreover, the inverse transform is given by $$(f,r,g,s)\mapsto r\, f_\# \Big(\frac{\mu_0}{\|\mu_0\|}\Big) - s \, g_\#\Big( \frac{\mu_0}{\|\mu_0\|}\Big) \quad \text { for } r,s\in \R^+,$$
and the inverses of $(f,r,0,0)$, $(0,0,g,s)$, and $(0,0,0,0)$ are the measure $f_\# \Big(\frac{\mu_0}{\|\mu_0\|}\Big)$, the measure $- s \, g_\#\Big( \frac{\mu_0}{\|\mu_0\|}\Big)$, and the zero measure, respectively.
\end{theorem}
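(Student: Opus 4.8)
The plan is to bootstrap everything off Theorem~\ref{bij_non-nomnalized} (the bijection $\TMP_{\mu_0}:\BM(\RE)\to\STPM$ for positive measures) together with the existence and uniqueness of the Jordan decomposition. I would organize the argument into three checks — that $\TMP_{\mu_0}$ indeed maps $\SM(\RE)$ into $\mathcal{I}_{\mu_0}$, that it is injective, and that it is surjective — and read off the inverse formula from the surjectivity construction. Throughout, the one structural fact I rely on is the standard uniqueness characterization of the Jordan decomposition: if a signed measure is written as a difference $\mu_1-\mu_2$ of two \emph{mutually singular} positive measures, then necessarily $\mu_1=\nu^+$ and $\mu_2=\nu^-$.

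\textbf{Well-definedness.} Given $\nu\in\SM(\RE)$ with Jordan decomposition $\nu=\nu^+-\nu^-$ (so $\nu^+,\nu^-\in\BM(\RE)$ and $\nu^+\perp\nu^-$), Theorem~\ref{bij_non-nomnalized} gives $\TMP_{\mu_0}(\nu^+)=(f,r)$ and $\TMP_{\mu_0}(\nu^-)=(g,s)$ in $\STPM$, so $\TMP_{\mu_0}(\nu)\in\STPM^2$. In the case $r,s\in\R^+$ the inverse formula of Theorem~\ref{bij_non-nomnalized} identifies $\nu^+=r\,f_\#(\mu_0/\|\mu_0\|)$ and $\nu^-=s\,g_\#(\mu_0/\|\mu_0\|)$; since mutual singularity is invariant under scaling by positive constants, $\nu^+\perp\nu^-$ is equivalent to $f_\#\mu_0\perp g_\#\mu_0$, so $\TMP_{\mu_0}(\nu)\in\mathcal{I}_{\mu_0}$. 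When $r=0$ or $s=0$ one of the variations vanishes and $\mathcal{I}_{\mu_0}$ imposes no constraint, so membership is automatic. Injectivity is then immediate: equality $\TMP_{\mu_0}(\nu_1)=\TMP_{\mu_0}(\nu_2)$ forces $\TMP_{\mu_0}(\nu_1^\pm)=\TMP_{\mu_0}(\nu_2^\pm)$ coordinatewise, and injectivity of the positive-measure transform yields $\nu_1^\pm=\nu_2^\pm$, hence $\nu_1=\nu_1^+-\nu_1^-=\nu_2^+-\nu_2^-=\nu_2$.

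\textbf{Surjectivity} is where the real content lies, and I expect it to be the main obstacle. Starting from $(f,r,g,s)\in\mathcal{I}_{\mu_0}$, I would set $\nu^+:=\TMP_{\mu_0}^{-1}(f,r)$ and $\nu^-:=\TMP_{\mu_0}^{-1}(g,s)$ via Theorem~\ref{bij_non-nomnalized}, both finite positive measures, and propose $\nu:=\nu^+-\nu^-$ (which is exactly the claimed inverse). The delicate step is to verify that $(\nu^+,\nu^-)$ is genuinely the Jordan decomposition of $\nu$ — otherwise $\TMP_{\mu_0}(\nu)$ would return the variations of some other splitting. This is precisely what the defining condition of $\mathcal{I}_{\mu_0}$ supplies: for $r,s\in\R^+$, $f_\#\mu_0\perp g_\#\mu_0$ gives $\nu^+\perp\nu^-$, and the uniqueness characterization quoted above then forces $\nu^\pm$ to be the positive and negative variations of $\nu$, so $\TMP_{\mu_0}(\nu)=(f,r,g,s)$.

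The degenerate tuples $(f,r,0,0)$, $(0,0,g,s)$, $(0,0,0,0)$ are dispatched separately — there $\nu$ is purely positive, purely negative, or zero, with trivial Jordan decomposition — yielding the three stated special-case inverses and completing surjectivity. The only remaining points needing care are the bookkeeping of the normalization $1/\|\mu_0\|$ and the scaling-invariance of singularity, both routine.
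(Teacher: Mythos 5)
Your proposal is correct and follows essentially the same route as the paper: reduce everything to the bijection for positive finite measures (Theorem~\ref{bij_non-nomnalized}) applied separately to the two Jordan components, use the mutual-singularity condition in the definition of $\mathcal{I}_{\mu_0}$ together with the uniqueness of the Jordan decomposition to verify that the candidate preimage $rf_\#(\mu_0/\|\mu_0\|)-sg_\#(\mu_0/\|\mu_0\|)$ really has these two measures as its positive and negative variations, and handle the degenerate tuples separately. No substantive differences from the paper's argument.
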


We call the operator $\TMP_{\mu_0}$ given in \eqref{one reference} the Signed Cumulative Distribution Transform (SCDT). 

As in previous subsections, restricting the measures to Borel sets $\Omega_0$  and $\Omega$ for $\mu_0$ and $\nu$ respectively, and defining 

\begin {equation} \label {IMUOmega}
\mathcal{I}_{\mu_0}(\Omega):=\{(f,r,g,s)\in \STPM^2(\Omega):  \,     f_\#\mu_0 \perp g_\#\mu_0 \, \text { for } r,s\in\R^+\},
\end{equation}
we obtain the following  bijection result.

\begin{corollary}\label{corollary_signed_measure}
Let $\Omega_0,\Omega$ be two Borel sets in $\RE$ and $\mu_0$ be a non-trivial finite positive Borel  measure on $\Omega_0$ that does not give mass to atoms. Then the restriction of the transform to $\mathcal{SM}(\Omega)$ (the set of signed finite measures on $\Omega$) 
is a bijection onto $\mathcal{I}_{\mu_0}(\Omega)$.
\end{corollary}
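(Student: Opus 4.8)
The plan is to reduce the statement to the global bijection of Theorem~\ref{bijection_signed_measure} combined with the positive-measure restriction already established in Corollary~\ref{FMsub}, in the same spirit in which Corollary~\ref{FMsub} was obtained from Theorem~\ref{bij_non-nomnalized}.

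First I would record the elementary characterization that a signed finite measure $\nu$ on $\RE$ lies in $\SM(\Omega)$ if and only if both Jordan parts $\nu^+,\nu^-$ lie in $\BM(\Omega)$. This is immediate from the fact that belonging to $\SM(\Omega)$ means the total variation $|\nu|=\nu^++\nu^-$ is concentrated on $\Omega$, and since $\nu^+,\nu^-$ are nonnegative and dominated by $|\nu|$, concentration of $|\nu|$ on $\Omega$ is equivalent to concentration of each of $\nu^+$ and $\nu^-$ on $\Omega$.

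Next, since $\TMP_{\mu_0}(\nu)=(\TMP_{\mu_0}(\nu^+),\TMP_{\mu_0}(\nu^-))$ is computed componentwise on the Jordan parts, I would apply Corollary~\ref{FMsub} to each part: $\nu^\pm\in\BM(\Omega)$ precisely when $\TMP_{\mu_0}(\nu^\pm)\in\STPM(\Omega)$. Combined with the previous step this yields $\nu\in\SM(\Omega)$ if and only if $\TMP_{\mu_0}(\nu)\in\STPM^2(\Omega)$. Since the mutual-singularity constraint defining $\mathcal{I}_{\mu_0}$ is identical to the one defining $\mathcal{I}_{\mu_0}(\Omega)$, we obtain the set identity $\mathcal{I}_{\mu_0}(\Omega)=\mathcal{I}_{\mu_0}\cap\STPM^2(\Omega)$, so that $\TMP_{\mu_0}$ maps $\SM(\Omega)$ exactly onto $\mathcal{I}_{\mu_0}(\Omega)$.

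Given this set identification, bijectivity is inherited from Theorem~\ref{bijection_signed_measure}. Injectivity on $\SM(\Omega)$ is free, since $\SM(\Omega)\subseteq\SM(\RE)$ and $\TMP_{\mu_0}$ is already injective on the larger domain. For surjectivity I would take $(f,r,g,s)\in\mathcal{I}_{\mu_0}(\Omega)$, form its global preimage $\nu=r\,f_\#(\mu_0/\|\mu_0\|)-s\,g_\#(\mu_0/\|\mu_0\|)$ from the inverse formula of Theorem~\ref{bijection_signed_measure}, and verify $\nu\in\SM(\Omega)$. The only point demanding care --- the main, though mild, obstacle --- is confirming that this $\nu$ is genuinely concentrated on $\Omega$ and that the two pushforward terms are exactly its Jordan parts. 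The support claim is precisely Corollary~\ref{FMsub} applied to $(f,r)$ and $(g,s)$ separately (each lies in $\STPM(\Omega)$, hence its inverse image lies in $\BM(\Omega)$), while the identification with the Jordan decomposition is forced by the condition $f_\#\mu_0\perp g_\#\mu_0$ built into $\mathcal{I}_{\mu_0}(\Omega)$, which makes $r\,f_\#(\mu_0/\|\mu_0\|)$ and $s\,g_\#(\mu_0/\|\mu_0\|)$ mutually singular and hence the genuine positive and negative parts of $\nu$.
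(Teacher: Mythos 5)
Your proposal is correct and follows essentially the same route as the paper, which simply states that the corollary follows from Theorem~\ref{bijection_signed_measure} in the same manner that Corollary~\ref{FMsub} follows from Theorem~\ref{bij_non-nomnalized} (injectivity inherited by restriction, surjectivity by checking that preimages of $\mathcal{I}_{\mu_0}(\Omega)$ under the global inverse land in $\SM(\Omega)$). You merely spell out the componentwise reduction to the Jordan parts and the mutual-singularity identification, which the paper leaves implicit.
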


\section{Properties and Applications}  \label{Props&Apps}

There are several properties of the CDT for positive PDFs that makes it a useful tool. The new transforms derived in this paper  retain some of these useful properties. In particular,  the computational example below, shows how these transforms can be useful in data classification. 
\subsection{Properties of the SCDT} \label {POF}
In this section we list two of these properties: 1) the  composition of  property and 2) the convexification property.

The composition property relates the transform of a measure $\eta$ to that of a measure $
\nu$ when their cumulations are related by a composition of functions. This property is useful for applications in which a set of signals is generated by a signal template that is modified by a transport-like phenomenon. Translations and scalings are examples of such  classes \cite{park2018cumulative} (see Figures \ref{translation} and \ref{dilation}). 

\begin{proposition} (Composition property) \label{Composition_Signed}
    Consider a finite positive reference measure $\mu_0 $ which does not give mass to atoms. Let $\nu$ be a signed measure.  Assume that  $g : \RE \rightarrow \RE $ is a strictly increasing surjection,  and  $\eta$  a signed measure such that $ F_{\eta}(x)=F_{\nu}(g(x))$. Then  $\|\eta^\pm\|=\|\nu^\pm\|$ and  the SCDT of $\eta$ with respect to $\mu_0$ is  given by 
    $$\TMP_{\mu_0}({\eta}) = (g^\inv \circ(\nu^+)^\star,\|\nu^+\|,g^\inv \circ (\nu^-)^\star,\|\nu^-\|),$$
     when $\|\nu^+\|\ne 0$ and $\|\nu^-\|\ne 0.$ If $\TMP_{\mu_0}( {\nu})=((\nu^+)^*, \|\nu^+\|, 0, 0)$, $\TMP_{\mu_0}( {\nu})=(0,0,(\nu^-)^\star, \|\nu^-\|)$, or $\TMP_{\mu_0}( {\nu})=(0,0,0,0)$ then $\TMP_{\mu_0}({\eta})$ is given by $(g^\inv \circ (\nu^+)^\star,\|\nu^+\|,0,0)$, $(0,0,g^\inv \circ (\nu^-)^\star,\|\nu^-\|),$ $(0,0,0,0)$, respectively.
\end{proposition}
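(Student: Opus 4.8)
The plan is to reduce the entire statement to one clean measure-theoretic identity, namely that $\eta$ is the push-forward of $\nu$ by $g^{-1}$, and then to track how this identity interacts with the Jordan decomposition and with the monotone generalized inverse. First I would record that a strictly increasing surjection $g:\RE\to\RE$ is automatically a homeomorphism: monotonicity allows only jump discontinuities, and surjectivity onto the connected space $\RE$ rules these out, so $g$ is a continuous strictly increasing bijection with $g(-\infty)=-\infty$ and $g(\infty)=\infty$; hence $g^{-1}$ is again strictly increasing and continuous, and by the remark following Definition~\ref{MGI} the monotone generalized inverse coincides with the ordinary one, $g^\inv=g^{-1}$. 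Setting $\tilde\eta:=g^{-1}_\#\nu$ and using $g([-\infty,x])=[-\infty,g(x)]$, a direct computation gives $F_{\tilde\eta}(x)=\nu(g([-\infty,x]))=\nu([-\infty,g(x)])=F_\nu(g(x))=F_\eta(x)$. Since a finite signed measure on $\RE$ is determined by its CDF (Proposition~\ref{characterization_meausres_on_RE}), this forces $\eta=g^{-1}_\#\nu$.

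Next I would transfer this to the Jordan decomposition. Because $g^{-1}$ is a bijection, if $\nu^+$ and $\nu^-$ are concentrated on disjoint sets $P$ and $P^c$, then $g^{-1}_\#\nu^+$ and $g^{-1}_\#\nu^-$ are concentrated on the disjoint sets $g^{-1}(P)$ and $g^{-1}(P^c)$, hence are mutually singular; since $g^{-1}_\#\nu=g^{-1}_\#\nu^+-g^{-1}_\#\nu^-$, the uniqueness of the Jordan decomposition gives $\eta^\pm=g^{-1}_\#\nu^\pm$. This immediately yields $\|\eta^\pm\|=\|g^{-1}_\#\nu^\pm\|=\|\nu^\pm\|$, and exactly the computation above gives the pointwise relation $F_{\eta^\pm}=F_{\nu^\pm}\circ g$.

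It then remains to feed this into the definition of the transform. The key auxiliary fact I would isolate is the commutation rule $(F\circ g)^\inv=g^{-1}\circ F^\inv$, valid for any $F$ when $g$ is an increasing homeomorphism: unwinding Definition~\ref{MGI}, one has $(F\circ g)^\inv(y)=\inf\{x:F(g(x))>y\}=\inf\{g^{-1}(u):F(u)>y\}=g^{-1}\big(\inf\{u:F(u)>y\}\big)$, where the last equality uses that the increasing continuous map $g^{-1}$ commutes with infima. Applying this with $F=F_{\nu^+}$ gives $F_{\eta^+}^\inv=g^{-1}\circ F_{\nu^+}^\inv$, and substituting into the starred function of \eqref{PMT} together with $\|\eta^+\|=\|\nu^+\|$ yields $(\eta^+)^\star=g^{-1}\circ(\nu^+)^\star=g^\inv\circ(\nu^+)^\star$, and symmetrically for $\nu^-$; the stated formula for $\TMP_{\mu_0}(\eta)$ then follows from \eqref{one reference}. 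The degenerate cases are immediate, since $\|\nu^\pm\|=0$ forces $\eta^\pm=g^{-1}_\#\nu^\pm=0$, collapsing the corresponding pair to $(0,0)$.

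I expect the main obstacle to be the commutation lemma $(F\circ g)^\inv=g^{-1}\circ F^\inv$: the generalized inverse is built from a strict inequality followed by an infimum, so I must check carefully that $g^{-1}$ can legitimately be pulled out of the infimum, which is precisely where continuity and strict monotonicity of $g$ are genuinely used and where the extended endpoints $\pm\infty$ deserve a separate glance. A secondary delicate point is the preservation of mutual singularity under $g^{-1}_\#$, which is what guarantees that $\eta^\pm=g^{-1}_\#\nu^\pm$ is the actual Jordan decomposition of $\eta$ and not merely some representation of $\eta$ as a difference of positive measures.
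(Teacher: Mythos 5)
Your proposal is correct and follows essentially the same route as the paper: you establish $F_{\eta^\pm}=F_{\nu^\pm}\circ g$ by transporting the Hahn--Jordan decomposition through $g^{-1}$ (the paper's Lemma~\ref{Decomposition}, phrased via push-forwards and uniqueness of the Jordan decomposition rather than via positive/negative sets), and then conclude with the composition rule for generalized inverses, which you re-derive by pulling the increasing homeomorphism $g^{-1}$ out of the infimum instead of citing Proposition~\ref{prop_inv}(\ref{comp_inv}). The two delicate points you flag are exactly the ones the paper isolates, and your treatment of both is sound.
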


\begin{corollary} (Translation)
\label{coro_translation}
If $\mu_0$ and $\nu$ are as above, and $g : \RE \rightarrow \RE $ is a translation by $a\in\R$, i.e. $g(x) = x-a$, then for $\eta \in \mathcal{SM}(\RE)$ such that $ F_{\eta}(x)=F_{\nu}(x-a)$, the SCDT of $\eta$ with respect to $\mu_0$ is  given by  
$\TMP_{\mu_0}({\eta}) = ((\nu^+)^\star + a,\|\nu^+\|,(\nu^-)^\star + a,\|\nu^-\|)$. (See figure \ref{translation}.)
\end{corollary}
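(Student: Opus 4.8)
\emph{Proof proposal.} The plan is to derive this directly as a special case of the Composition Property (Proposition \ref{Composition_Signed}), so the only real work is to verify the hypotheses for the particular map $g(x)=x-a$ and to compute its monotone generalized inverse. First I would check that, for fixed $a\in\R$, the map $g(x)=x-a$ is a strictly increasing surjection of $\RE$ onto itself: on $\R$ this is immediate, and under the usual conventions $\pm\infty-a=\pm\infty$ the map fixes the two endpoints, so $g$ is in fact a strictly increasing bijection of $\RE$. Since the hypothesis $F_\eta(x)=F_\nu(x-a)=F_\nu(g(x))$ is exactly the relation required by Proposition \ref{Composition_Signed}, that proposition applies verbatim, yielding $\|\eta^\pm\|=\|\nu^\pm\|$ (which justifies recording the masses as $\|\nu^\pm\|$) together with
$$\TMP_{\mu_0}(\eta)=\big(g^\inv\circ(\nu^+)^\star,\ \|\nu^+\|,\ g^\inv\circ(\nu^-)^\star,\ \|\nu^-\|\big).$$

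The second step is to evaluate $g^\inv$. Directly from Definition \ref{MGI},
$$g^\inv(y)=\inf\{x\in\RE:\ x-a>y\}=\inf\{x\in\RE:\ x>y+a\}=y+a,$$
or, equivalently, one invokes the remark following Definition \ref{MGI} that for a continuous, strictly increasing function the monotone generalized inverse coincides with the ordinary inverse, and $g^{-1}(y)=y+a$. Substituting $g^\inv(y)=y+a$ into the composition formula replaces each term $g^\inv\circ(\nu^\pm)^\star$ by $(\nu^\pm)^\star+a$, producing exactly the asserted expression $\big((\nu^+)^\star+a,\ \|\nu^+\|,\ (\nu^-)^\star+a,\ \|\nu^-\|\big)$. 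The degenerate cases in which one or both of $\|\nu^\pm\|$ vanish follow in the same way from the corresponding cases listed in Proposition \ref{Composition_Signed}.

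I expect no genuine obstacle here, as the statement is a routine specialization. The only points that warrant a moment of care are the bookkeeping at the endpoints $\pm\infty$ when asserting that translation is a surjection of $\RE$, and the elementary computation of $g^\inv$, which is precisely where the extended–real–line conventions enter.
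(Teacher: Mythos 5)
Your proof is correct and takes essentially the same route as the paper: the corollary is stated there as an immediate specialization of Proposition \ref{Composition_Signed}, with no separate proof given, and your verification that $g(x)=x-a$ is a strictly increasing surjection of $\RE$ together with the computation $g^\inv(y)=y+a$ is exactly the intended argument.
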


\begin{corollary}\label{coro_dilation} (Dilation)
If $\mu_0$ and $\nu$ are as above, $g : \RE \rightarrow \RE $ is a dilation by $a \in (0,\infty)$, i.e. $g(x) = \frac{x}{a},$ then for $\eta \in \mathcal{SM}(\RE)$ such that $ F_{\eta}(x)=F_{\nu}(\frac{x}{a})$, the SCDT of $\eta$ with respect to $\mu_0$ is  given by  
$\TMP_{\mu_0}({\eta}) = (a(\nu^+)^\star,\|\nu^+\|,a(\nu^-)^\star,\|\nu^-\|).$ (See figure \ref{dilation}.)
\end{corollary}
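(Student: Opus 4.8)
The plan is to obtain this as a direct specialization of the composition property, Proposition \ref{Composition_Signed}, to the particular map $g(x) = x/a$. First I would check that $g$ satisfies the hypotheses of that proposition. Since $a \in (0,\infty)$, the map $g(x) = x/a$ is strictly increasing, and it sends $\RE$ bijectively onto $\RE$ (with $g(\pm\infty) = \pm\infty$ because $a>0$), so it is indeed a strictly increasing surjection. Consequently the composition property applies to any signed measure $\nu$ together with the associated $\eta$ satisfying $F_\eta(x) = F_\nu(g(x)) = F_\nu(x/a)$, which is exactly the hypothesis imposed in the corollary.

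Next I would compute the monotone generalized inverse $g^\inv$ explicitly from Definition \ref{MGI}. Because $g$ is a continuous strictly increasing bijection, $g^\inv$ coincides with its ordinary inverse, and solving $x/a = y$ yields
$$g^\inv(y) = \inf\{x \in \RE : x/a > y\} = \inf\{x \in \RE : x > ay\} = ay,$$
so that $g^\inv = M_a$, the scaling map by the factor $a$.

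Finally I would substitute this computation into the conclusion of Proposition \ref{Composition_Signed}. For each sign one has $g^\inv \circ (\nu^\pm)^\star = M_a \circ (\nu^\pm)^\star = a\,(\nu^\pm)^\star$, while the mass identities $\|\eta^\pm\| = \|\nu^\pm\|$ are inherited verbatim from the proposition. This produces
$$\TMP_{\mu_0}(\eta) = \bigl(a(\nu^+)^\star,\ \|\nu^+\|,\ a(\nu^-)^\star,\ \|\nu^-\|\bigr),$$
as claimed. The degenerate branches (when $\|\nu^+\| = 0$ or $\|\nu^-\| = 0$) follow at once by reading off the corresponding cases already recorded in the proposition. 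Since essentially all of the analytic content is carried by the composition property, the only step demanding any care is the evaluation of $g^\inv$; the one point to watch is the endpoint behavior at $\pm\infty$, but for $a>0$ the map preserves signs, so no delicate case analysis is needed and I do not anticipate a genuine obstacle here.
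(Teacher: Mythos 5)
Your proposal is correct and follows exactly the route the paper intends: the corollary is stated as an immediate specialization of Proposition \ref{Composition_Signed}, and the only computation needed is $g^\inv(y)=\inf\{x: x/a>y\}=ay$, which you carry out correctly (including the endpoint behavior at $\pm\infty$ for $a>0$). The paper gives no separate proof for this corollary, treating it precisely as you do.
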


\begin{figure}[H]
	\centering
	\includegraphics[width=1.0\linewidth]{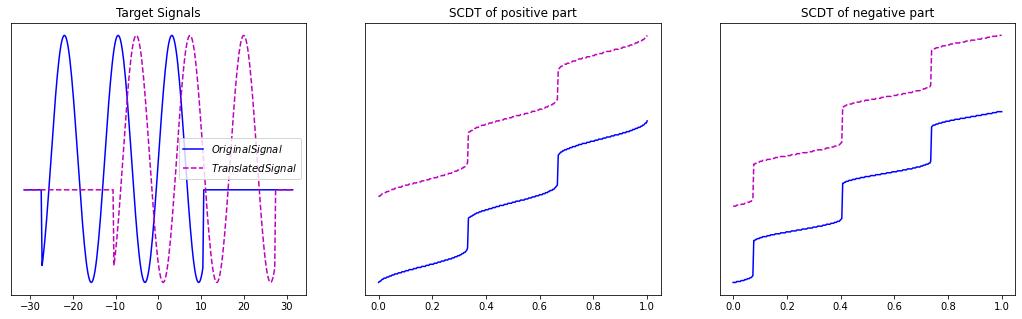}
	\caption{Translation property of the SCDT. Left panel: A signal and its translation. Right panel: A transforms of the signals and its translation (the reference measure $\mu_0$ was taken with uniform distribution on $[0,1]$).}
\label{translation}
\end{figure}

\begin{figure}[H]
	\centering
	\includegraphics[width=1.0\linewidth]{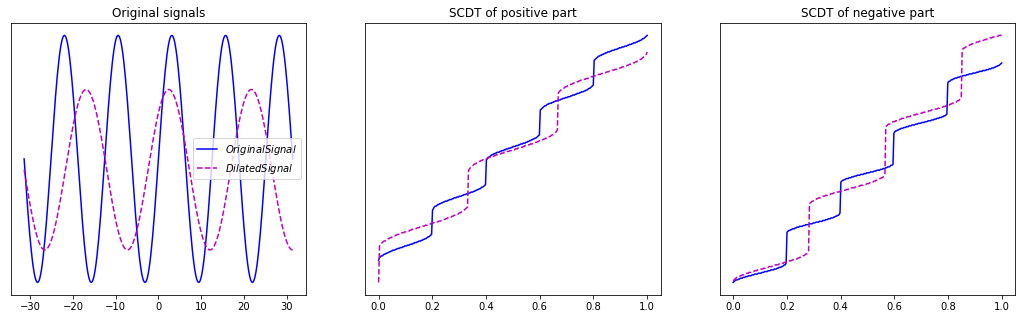}
	\caption{Scaling property of the SCDT. Left panel: A signal and its scaling. Right panel: A transforms of the signals and its scaling (the reference measure $\mu_0$ was taken with uniform distribution on $[0,1]$).}
\label{dilation}
\end{figure}

It is also worth noting that in certain applications data sets can be rendered convex in the transform domain. For example, sets generated by translations of a template signal can have a complex geometry in the signal domain. However, it has a very simple convex structure in the transform domain as depicted in Figure \ref{fig:convexity_fig}.  This convexification property is useful in classification problems since two disjoint convex data sets can be  separated by a linear classifier.  The following convexity property is a generalization of the convexity property that was proved in \cite{aldroubi2020partitioning} for signals that consist of normalized, compactly supported, non-negative Lebesgue measurable functions.

\begin{proposition}\label{convexity_Signed}
(Convexity property)
Let $\mu_0$ be a finite positive reference measure  that does not give mass to atoms and  let $\nu$ be a signed measure. Let $\mathbb H$ be a subset of strictly increasing surjections  on $\RE$, and
\begin{equation}\label{S_nu,H}
   \mathcal S_{\nu,\mathbb{ H}}:=\{\eta\in\mathcal{SM}(\RE): F_\eta = F_\nu \circ h, \, h\in \mathbb{H} \}. 
\end{equation}
Then, $\widehat{\mathcal S_{\nu,\mathbb{ H}}}:=\{\widehat{\eta}: \, \eta\in \mathcal S_{\nu,\mathbb{H}}\}$
is convex for every $\nu$ if and only if $\mathbb{H}^{-1}:=\{h^{-1}: \, h\in \mathbb{H}\}$ is convex.
\end{proposition}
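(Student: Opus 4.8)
The plan is to reduce everything to the explicit description of the transformed set $\widehat{\mathcal S_{\nu,\mathbb{H}}}$ furnished by the composition property, and then to recognize its convexity as a statement purely about affine combinations of the inverse maps $h^{-1}$.

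First I would invoke Proposition \ref{Composition_Signed}. If $\eta\in\mathcal S_{\nu,\mathbb{H}}$ then $F_\eta=F_\nu\circ h$ for some $h\in\mathbb{H}$, and since $h$ is a strictly increasing surjection its monotone generalized inverse coincides with its ordinary inverse, $h^\inv=h^{-1}$. Hence
\[
\widehat{\eta}=\bigl(h^{-1}\circ(\nu^+)^\star,\ \|\nu^+\|,\ h^{-1}\circ(\nu^-)^\star,\ \|\nu^-\|\bigr),
\]
so that $\widehat{\mathcal S_{\nu,\mathbb{H}}}=\{(h^{-1}\circ(\nu^+)^\star,\|\nu^+\|,h^{-1}\circ(\nu^-)^\star,\|\nu^-\|):h\in\mathbb{H}\}$. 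The two structural facts I would then isolate are: (i) the mass coordinates $\|\nu^+\|,\|\nu^-\|$ are constant over the whole set; and (ii) precomposition distributes over affine combinations, i.e. for fixed $u$ one has $\lambda(\phi\circ u)+(1-\lambda)(\psi\circ u)=(\lambda\phi+(1-\lambda)\psi)\circ u$. Consequently, for $h_1,h_2\in\mathbb{H}$ and $\lambda\in[0,1]$,
\[
\lambda\widehat{\eta_1}+(1-\lambda)\widehat{\eta_2}=\bigl(\psi\circ(\nu^+)^\star,\ \|\nu^+\|,\ \psi\circ(\nu^-)^\star,\ \|\nu^-\|\bigr),\qquad \psi:=\lambda h_1^{-1}+(1-\lambda)h_2^{-1},
\]
and membership of this point in $\widehat{\mathcal S_{\nu,\mathbb{H}}}$ is exactly the requirement that $\psi\circ(\nu^\pm)^\star=h^{-1}\circ(\nu^\pm)^\star$ for some $h\in\mathbb{H}$.

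For the \emph{if} direction, assume $\mathbb{H}^{-1}$ is convex. Then $\psi\in\mathbb{H}^{-1}$, i.e. $\psi=h^{-1}$ for some $h\in\mathbb{H}$; the measure $\eta\in\mathcal S_{\nu,\mathbb{H}}$ with $F_\eta=F_\nu\circ h$ then has $\widehat{\eta}$ equal to the displayed convex combination (and, being a genuine transform, it lies automatically in $\mathcal{I}_{\mu_0}$). Hence $\widehat{\mathcal S_{\nu,\mathbb{H}}}$ is convex, for every $\nu$. For the \emph{only if} direction I would exploit the quantifier ``for every $\nu$'': choose $\nu$ to be a positive measure ($\nu^-=0$) whose support is all of $\RE$, e.g.\ a Gaussian, so that $(\nu^+)^\star=F_{\nu^+}^\inv\circ M_{\|\nu^+\|}\circ M_{1/\|\mu_0\|}\circ F_{\mu_0}$ has dense range in $\RE$. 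The transformed set becomes $\{(h^{-1}\circ(\nu^+)^\star,\|\nu^+\|,0,0):h\in\mathbb{H}\}$, and its assumed convexity yields, for each $h_1,h_2,\lambda$, some $h\in\mathbb{H}$ with $h^{-1}\circ(\nu^+)^\star=\psi\circ(\nu^+)^\star$. Since $(\nu^+)^\star$ has dense range and $h^{-1},\psi$ are continuous monotone bijections, this forces $h^{-1}=\psi$, i.e. $\lambda h_1^{-1}+(1-\lambda)h_2^{-1}\in\mathbb{H}^{-1}$, giving convexity of $\mathbb{H}^{-1}$.

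The main obstacle is the \emph{only if} direction, specifically justifying that $h^{-1}$ is recovered in full from the composite $h^{-1}\circ(\nu^+)^\star$. This is where full support of $\nu^+$ is essential, and one must argue carefully that the range of $(\nu^+)^\star$ is dense in $\RE$ even when $\mu_0$ has bounded support: the continuous CDF $F_{\mu_0}$ sweeps out $(0,\|\mu_0\|)$ on a $\mu_0$-full set, after which $F_{\nu^+}^\inv$ sweeps out the support of $\nu^+$. One then has to upgrade the equality $h^{-1}\circ(\nu^+)^\star=\psi\circ(\nu^+)^\star$, which a priori holds only $\mu_0$-a.e.\ since elements of $\ND$ are identified up to $\mu_0$-null sets, to a genuine pointwise identity $h^{-1}=\psi$ on $\RE$, using the continuity of these monotone maps together with the density just established.
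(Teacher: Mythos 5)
Your proposal is correct, and it coincides with the paper's proof in its overall architecture: both reduce to the characterization $\widehat{\mathcal S_{\nu,\mathbb{H}}}=\{(h^{-1}\circ(\nu^+)^\star,\|\nu^+\|,h^{-1}\circ(\nu^-)^\star,\|\nu^-\|):h\in\mathbb{H}\}$ via Proposition \ref{Composition_Signed}, and the ``if'' direction is the identical computation (constant mass coordinates plus $\alpha(h^{-1}\circ u)+(1-\alpha)(g^{-1}\circ u)=(\alpha h^{-1}+(1-\alpha)g^{-1})\circ u$). Where you genuinely diverge is the ``only if'' direction. The paper exploits the quantifier ``for every $\nu$'' by testing against an entire family of target measures --- translated copies of the uniform measure on $[0,1]$ together with point masses at $\pm\infty$ --- whose transforms have ranges that jointly cover $\RE$, and then invokes the translation/dilation corollaries to assemble the resulting local agreements; as written, that route still owes the reader an argument that the a priori $T$-dependent witnesses in $\mathbb{H}^{-1}$ can be taken to be a single function. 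You instead test against one positive measure of full support, so that $(\nu^+)^\star$ restricted to a $\mu_0$-full set already has dense range in $\RE$ (exactly what Lemma \ref{Lema_aux1} delivers, as you note), and the $\mu_0$-a.e.\ identity $h^{-1}\circ(\nu^+)^\star=\psi\circ(\nu^+)^\star$ then upgrades to $h^{-1}=\psi$ everywhere because strictly increasing surjections of $\RE$, and their convex combinations, are automatically continuous and fix $\pm\infty$. This buys a one-shot argument with a single $h$ and no patching step. The two points you should write out in full are the ones you already flag --- continuity of the maps involved so that dense agreement implies global agreement, and the density of the image of a $\mu_0$-full set --- together with the degenerate cases $\|\nu^{\pm}\|=0$, which the composition property handles separately.
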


We remark that the set $\mathcal S_{\nu,\mathbb{ H}}$ can be interpreted as an algebraic generative model for signal data. Here $\nu$ specifies a measure (signal) that can be considered as a template for its class, which is denoted by $\mathcal S_{\nu,\mathbb{ H}}$. The elements of  $\mathcal S_{\nu,\mathbb{ H}}$ are formed  by the action of functions in  $\mathbb H$, as in Proposition \ref {convexity_Signed}. For example, $\mathbb H$ can be the set of all possible translations, or positive scalings. If  $\mathbb H$  is such that $\mathbb{H}^{-1}$ is convex, then the set $\widehat{\mathcal S _{\nu,\mathbb{ H}}}$ is also convex. As explained in \cite{aldroubi2020partitioning}, $\mathbb{H}^{-1}$ is convex if $\widehat{\mathcal S_{\nu,\mathbb{ H}}}$ is a convex group (numerous examples are described in \cite{aldroubi2020partitioning}).

\begin{figure}[H]
	\centering
	\includegraphics[width=0.6\linewidth]{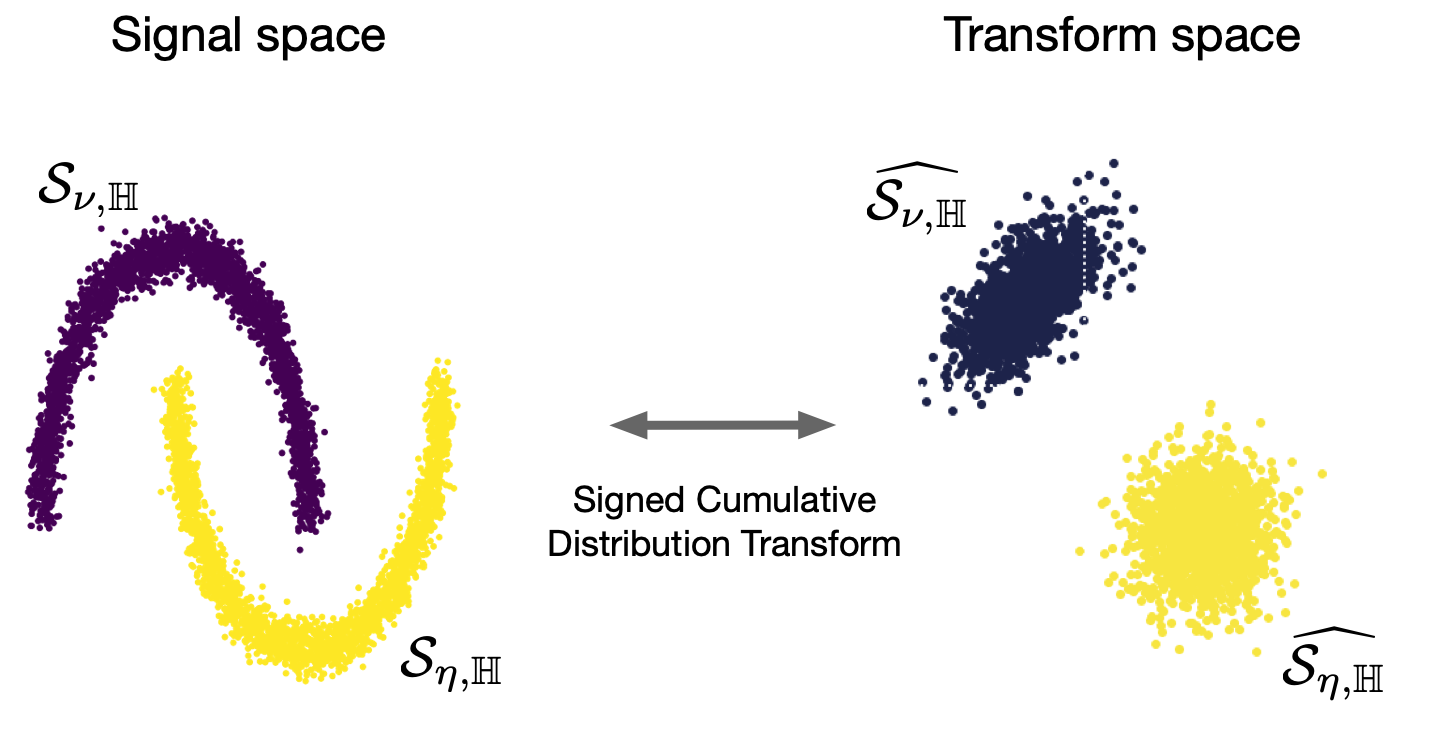}
	\caption{ The set of signals generated from the algebraic generative model stated in Proposition \ref{convexity_Signed} becomes convex  in the SCDT space.} 
\label{fig:convexity_fig}
\end{figure}

\subsection{Metric}\label{sec_distance}
The Wasserstein distance  in the space $\mathcal{P}(\Omega)$ is intimately related to the $L^2$ distance in the transport transform domain \cite{park2018cumulative, thorpeNotes, villani2003topics}. This relation is useful in some applications since it render certain optimization problems involving the Wasserstein distance into  standard least squares optimization problems. In this section, we will recall the definition of the Wasserstein distance for probability measures and its relation to the $L^2$ distance in the transform domain. We will extend this result using the generalized transport transforms of the previous section.

\begin{definition}[Wasserstein distance]
Let $\Omega\subseteq \R$ be a Borel set endowed with the Euclidean distance. 
Let $\mathcal{P}_2(\Omega)$ denote the subset of all probability measure on $\Omega$ with finite second moment
$$\mathcal{P}_2(\Omega):=\left\{\nu\in\mathcal{P}(\Omega): \, \int_{\Omega} |x|^2 \, d\nu(x)<\infty\right\}$$
The 
Wasserstein distance $d_{W^2}$ is defined as
\begin{equation}\label{was_original}
 d_{W^2}(\nu,\eta):=\left(\inf_{\pi\in\Pi(\nu,\eta)}\int_{\R\times\R} |x-y|^2 \, d\pi(x,y)\right)^{1/2} \qquad \forall \nu,\eta\in \mathcal{P}_2(\Omega), 
\end{equation}
where $\Pi(\nu,\eta)$ denotes the collection of all measures on $\Omega\times\Omega$ with marginals $\nu$   and $\eta$  on the first and second factors respectively.
\end{definition}

The following embedding is well-known \cite{thorpeNotes, villani2003topics}.

\begin{proposition}\label{prop_wass}
Let $\Omega\subseteq \R$ be a Borel set, and $\mu_0\in \mathcal{P}_2(\Omega)$ a reference measure that does not give mass to atoms. If $\nu, \eta\in\mathcal{P}_2(\Omega)$, then 
\begin{equation}\label{wass_prob}
    d_{W^2}(\nu,\eta) =\|\TMP_{\mu_0}(\nu) -\TMP_{\mu_0}(\eta) \|_{L^2(\mu_0)}
\end{equation}
In particular, the transport transform $\TMP_{\mu_0}$ is an isometry between $\mathcal{P}_2(\Omega)$ with the Wasserstein metric and the image of the transport transform $\mathcal{N}_{\mu_0}(\Omega)$ endowed with the $L^2(\mu_0)$ metric.
\end{proposition}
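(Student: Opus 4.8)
The plan is to reduce the Wasserstein distance to the classical one-dimensional quantile formula and then transport that formula back to an $L^2(\mu_0)$ integral by a change of variables governed by the reference CDF $F_{\mu_0}$.

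First I would recall the closed form of the quadratic Wasserstein distance on the line: for $\nu,\eta\in\mathcal{P}_2(\Omega)$,
\[
d_{W^2}(\nu,\eta)^2=\int_0^1 \big|F_\nu^\inv(t)-F_\eta^\inv(t)\big|^2\,dt .
\]
This rests on two facts. The map $F_\nu^\inv$ pushes the Lebesgue measure on $[0,1]$ forward to $\nu$ (the quantile/inverse-CDF construction), so the comonotone coupling $\big(F_\nu^\inv,F_\eta^\inv\big)_\#\big(\mathrm{Leb}|_{[0,1]}\big)$ belongs to $\Pi(\nu,\eta)$; and for the convex cost $|x-y|^2$ this monotone rearrangement is optimal, so the infimum in \eqref{was_original} is attained there and evaluates to the displayed integral. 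I would cite this as the standard one-dimensional optimal transport result.

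Second I would unfold the right-hand side of \eqref{wass_prob} and perform the change of variables $t=F_{\mu_0}(x)$. By definition of the transform,
\[
\big\|\TMP_{\mu_0}(\nu)-\TMP_{\mu_0}(\eta)\big\|_{L^2(\mu_0)}^2
=\int_{\RE}\big|F_\nu^\inv(F_{\mu_0}(x))-F_\eta^\inv(F_{\mu_0}(x))\big|^2\,d\mu_0(x).
\]
The key measure-theoretic input is that, since $\mu_0$ gives no mass to atoms, $F_{\mu_0}$ is continuous and non-decreasing with range filling $[0,1]$, whence $(F_{\mu_0})_\#\mu_0=\mathrm{Leb}|_{[0,1]}$; indeed $\mu_0(\{x:F_{\mu_0}(x)\le t\})=t$ for every $t\in[0,1]$. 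Applying this pushforward identity to the integral above turns it into $\int_0^1|F_\nu^\inv(t)-F_\eta^\inv(t)|^2\,dt$, which equals $d_{W^2}(\nu,\eta)^2$ by the first step, proving \eqref{wass_prob}. The isometry claim is then immediate: $\TMP_{\mu_0}$ is distance-preserving onto its image, and it is already a bijection onto $\ND(\Omega)$ by Theorem \ref{Prob_bijection} (and Corollary \ref{prob_restrict}).

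The main obstacle I anticipate is the careful justification of the pushforward identity $(F_{\mu_0})_\#\mu_0=\mathrm{Leb}|_{[0,1]}$ together with the change of variables when $F_{\mu_0}$ is continuous but not strictly increasing: on the flat pieces of $F_{\mu_0}$ the map is not invertible and $F_{\mu_0}^\inv$ fails to be a genuine inverse, so one must verify that those flat intervals carry no $\mu_0$-mass and hence do not affect the integral. A secondary point is that the quantile functions $F_\nu^\inv,F_\eta^\inv$ are specified only up to the convention in Definition \ref{MGI}; I would note that the left- and right-continuous versions coincide Lebesgue-a.e.\ on $[0,1]$, so the value of the integral, and hence the identity, is independent of this choice.
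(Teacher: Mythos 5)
Your proposal is correct and follows essentially the same route as the paper: invoke the standard one-dimensional quantile formula $d_{W^2}(\nu,\eta)^2=\int_0^1|F_\nu^\inv(t)-F_\eta^\inv(t)|^2\,dt$, then convert the $L^2(\mu_0)$ norm of the difference of transforms into that integral via the pushforward identity $(F_{\mu_0})_\#\mu_0=\mu_{Leb}|_{[0,1]}$ (the paper's Lemma \ref{Lema_aux1}, which holds precisely because $\mu_0$ gives no mass to atoms). Your additional remarks on flat pieces of $F_{\mu_0}$ and the a.e.-equivalence of quantile conventions are sound and only make explicit points the paper leaves implicit.
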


\subsection{Metric for $\BM_2(\Omega)$ and $\SM_2(\Omega)$} 
\label{sec:metric}
Let $\BM_2(\Omega)$ and $\SM_2(\Omega)$ be the subsets of $\BM(\Omega)$ and $\SM(\Omega)$, respectively, of measures having finite second moment. 
For $\mu_0\in \mathcal{P}_2(\Omega)$ that does not give mass to atoms, consider the Cartesian product  $L^2(\mu_0)\times\R$  endowed with the norm
\begin{equation*}
    \|(f,r)\|_{L^2(\mu_0)\times\R}=\sqrt{\|f\|_{L^2(\mu_0)}^2+r^2 \, }  .
\end{equation*}
For $\nu,\eta\in\BM_2(\Omega)$  we define the following distance function:
\begin{equation}\label{wass_prob1}
    D_{W^2}(\nu,\eta) :=\|\TMP_{\mu_0}(\nu) -\TMP_{\mu_0}(\eta) \|_{L^2(\mu_0)\times\R}
\end{equation}
In particular,  from Proposition \ref{prop_wass} it follows that for non-trivial measures $\nu,\eta\in \BM_2(\Omega)$, \eqref{wass_prob1} coincides with 
\begin{equation}\label{wass_definition2}
   D_{W^2}(\nu,\eta) =\sqrt{ \left(d_{W^2}\left(\frac{\nu}{\|\nu\|},\frac{\eta}{\|\eta\|}\right)\right)^2 + \left|\,  \|\nu\| - \|\eta\|\, \right|^2 \,  }
\end{equation}
where $d_{W^2}$ is the Wasserstein distance defined in Equation \eqref {was_original}. 
This identity and the definition of the transform when $\nu$ or $\eta$ are the zero measures, imply that $D_{W^2}$ does not depend on the choice of the reference $\mu_0$.
 In particular, if $\nu$ is non-trivial and $\eta$ is the zero measure
$$
    \left(D_{W^2}(\nu,0)\right)^2=\|\TMP_{\mu_0}(\nu)  \|_{L^2(\mu_0)}^2 =\int_\R |F_{\frac{\nu}{\|\nu\|}}^\inv\circ F_{\mu_0}|^2 \, d\mu_0 +\|\nu\|^2\\
    =\int_0^1 |F_{\frac{\nu}{\|\nu\|}}^\inv(t)|^2 dt+\|\nu\|^2,
$$
which holds by applying Lemma \ref{Lema_aux1} in the same way it is done in the proof of Proposition \ref{prop_wass}. 
Using the Definition \eqref {wass_prob1},  $\TMP_{\mu_0}$ becomes  an isometry from $(\BM_2(\Omega), D_{W^2}) $ to  $(\mathcal {T}_{\mu_0}(\Omega),\|\cdot\|_{L^2(\mu_0)\times\R)})$. \\

Analogously, by considering the space $(L^2(\mu_0)\times\R)^2$ endowed with
\begin{equation*}
    \|(f,r,g,s)\|_{(L^2(\mu_0)\times\R)^2}=\sqrt{\|f\|_{L^2(\mu_0)}^2+r^2+\|g\|_{L^2(\mu_0)}^2+s^2  \, },  
\end{equation*}
we define 
\begin{equation}\label{wass_prob2}
    D_{S}(\nu,\eta) :=\|\TMP_{\mu_0}(\nu) -\TMP_{\mu_0}(\eta) \|_{(L^2(\mu_0)\times\R)^2} \qquad \forall \nu,\eta\in\SM_2(\Omega)
\end{equation}
which endows $\SM_2(\Omega)$ with a distance on  and which is independent from the choice of the reference $\mu_0$. Indeed, using the Hahn-Jordan decomposition for  $\nu, \eta \in \SM_2(\Omega)$, \eqref{wass_prob2} is exactly 
\begin{equation}\label{wass_definition3}
  D_{S}(\nu,\eta) = \sqrt{\left(D_{W^2}(\nu^+,\eta^+)\right)^2 + \left(D_{W^2}(\nu^-,\eta^-)\right)^2 \, }.
\end{equation}


\subsection{Applications}

The CDT and SCDT have many applications in signal and image analysis which can be broadly categorized into signal estimation and detection (classification) problems. With regards to signal estimation, in \cite{rubaiyat2020parametric} the authors applied the CDT to estimating parameters (time delay, frequency, chirp) pertaining to a measured signal. In that work, the CDT was used to `linearize' the problem so that a global optimal estimate for signal parameters can be estimated in CDT space using a simple linear least squares technique. Here we show how the SCDT can be utilized to similarly facilitate the machine learning of classifiers by `linearizing' the problem in transform space, as illustrated in Figure \ref{fig:convexity_fig}. To that end, we utilize the following property of the SCDT.

Let $\mathbb H,$ $\mathcal S_{\nu,\mathbb{ H}}$ and $ \mathcal S_{\eta,\mathbb{ H}}$ be as defined in Proposition \ref{convexity_Signed}. The sets $\mathcal S_{\nu,\mathbb{ H}}$ and $ \mathcal S_{\eta,\mathbb{ H}}$ can be interpreted as algebraic generative models for signal data. The theorem stated below becomes an easy consequence of Proposition \ref {convexity_Signed} and the Hahn Banach separation theorem.

\begin{theorem} \label{linearSeparability}
Let $\mu_0 \in \mathcal P_2(\R)$ be a reference measure that does not give mass to atoms,  $\nu,\eta \in \SM_2(\Omega)$ be two signed measures, and $\mathbb H^{-1} $ be a convex set of strictly increasing bijections on $\R$.  If $\mathcal D_\nu, \mathcal D_\eta$ are two non-empty, finite sets drawn from  two disjoint generative models $\mathcal S_{\nu,\mathbb{ H}}$ and $ \mathcal S_{\eta,\mathbb{ H}}$,  respectively, then the corresponding sets $\widehat {\mathcal D_\nu}, \widehat{\mathcal D_\eta}$ in SCDT space are linearly separable.
\end{theorem}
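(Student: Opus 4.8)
The plan is to reduce the assertion to the strict separation of two disjoint compact convex sets inside the Hilbert space $(L^2(\mu_0)\times\R)^2$ that hosts the SCDT, and then to invoke the Hahn--Banach separation theorem. The two structural inputs are the bijectivity of $\TMP_{\mu_0}$ (Theorem \ref{bijection_signed_measure}) and the convexity property (Proposition \ref{convexity_Signed}). As a preliminary, I would fix the ambient setting: since the sampled measures have finite second moment (they lie in $\SM_2(\Omega)$) and $\mu_0\in\mathcal P_2(\R)$, the transforms of the finitely many elements of $\mathcal D_\nu$ and $\mathcal D_\eta$ are genuine points of $(L^2(\mu_0)\times\R)^2$ endowed with the norm of Section \ref{sec:metric}. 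Thus ``linearly separable'' is to be read as: there exist a continuous linear functional $\phi$ on $(L^2(\mu_0)\times\R)^2$ and a scalar $c$ with $\phi<c$ on $\widehat{\mathcal D_\nu}$ and $\phi>c$ on $\widehat{\mathcal D_\eta}$.

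Next I would use the two structural inputs to produce disjoint convex regions containing the two samples. Because $\mathbb H^{-1}$ is convex, Proposition \ref{convexity_Signed} shows that $\widehat{\mathcal S_{\nu,\mathbb H}}$ and $\widehat{\mathcal S_{\eta,\mathbb H}}$ are convex subsets of $(L^2(\mu_0)\times\R)^2$. Since the generative models $\mathcal S_{\nu,\mathbb H}$ and $\mathcal S_{\eta,\mathbb H}$ are disjoint and $\TMP_{\mu_0}$ is injective (Theorem \ref{bijection_signed_measure}), their images satisfy $\widehat{\mathcal S_{\nu,\mathbb H}}\cap\widehat{\mathcal S_{\eta,\mathbb H}}=\emptyset$. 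Now $\widehat{\mathcal D_\nu}\subseteq\widehat{\mathcal S_{\nu,\mathbb H}}$ and $\widehat{\mathcal D_\eta}\subseteq\widehat{\mathcal S_{\eta,\mathbb H}}$, so by convexity of the ambient sets the convex hulls $K_\nu:=\mathrm{conv}(\widehat{\mathcal D_\nu})$ and $K_\eta:=\mathrm{conv}(\widehat{\mathcal D_\eta})$ remain inside $\widehat{\mathcal S_{\nu,\mathbb H}}$ and $\widehat{\mathcal S_{\eta,\mathbb H}}$ respectively, whence $K_\nu\cap K_\eta=\emptyset$. This is precisely where convexity does the work: it guarantees that forming convex hulls of the finite samples does not destroy the disjointness inherited from the generative models.

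Finally, $K_\nu$ and $K_\eta$ are convex hulls of finitely many points, hence compact finite-dimensional polytopes. Two disjoint compact convex subsets of a Hilbert space can be strictly separated by a continuous linear functional; concretely, I would restrict attention to the finite-dimensional subspace spanned by the points of $\widehat{\mathcal D_\nu}\cup\widehat{\mathcal D_\eta}$, where strict separation of disjoint compact convex sets is elementary, and then extend the separating functional to all of $(L^2(\mu_0)\times\R)^2$. The resulting $\phi$ and threshold $c$ separate $\widehat{\mathcal D_\nu}$ from $\widehat{\mathcal D_\eta}$, as required.

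The main obstacle is conceptual rather than computational: one must correctly chain bijectivity and convexity so that disjointness survives the passage to convex hulls of the samples. Once that bookkeeping is in place, the separation step is a routine application of Hahn--Banach, and the finiteness of $\mathcal D_\nu,\mathcal D_\eta$ means no delicate infinite-dimensional separation hypotheses (nonempty interior, properness) are needed.
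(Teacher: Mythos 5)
Your proposal is correct and follows essentially the same route as the paper: pass to convex hulls of the finite samples, note they are compact and remain inside the disjoint convex images $\widehat{\mathcal S_{\nu,\mathbb H}}$ and $\widehat{\mathcal S_{\eta,\mathbb H}}$ (via Proposition \ref{convexity_Signed} and the injectivity from Theorem \ref{bijection_signed_measure}), and then apply the Hahn--Banach separation theorem. Your added remark about reducing to the finite-dimensional span before separating is a minor elaboration of the same step, not a different argument.
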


The theorem above states that so long as data is generated according to the algebraic generative model defined in Proposition \ref{convexity_Signed}, a training procedure using a linear classifier, using data in SCDT space, is a well-posed problem in the sense that there is a guarantee that a solution exists, although it may not be unique. The theorem does not state how to compute such linear function, rather it states that there will exist a linear classifier that will separate $\widehat {\mathcal D_\nu}, \widehat{\mathcal D_\eta}$ so long as $\mathcal S_{\nu,\mathbb{ H}}$ and $ \mathcal S_{\eta,\mathbb{ H}}$ are disjoint.

To demonstrate the ability of the SCDT to render signal classes linearly separable we consider the problem of distinguishing signals of the kind demonstrated in Figure \ref{fig:signals_for_classification}. Let signals $\sigma_1,\sigma_2,\sigma_3$  be associated with the measures $d\nu_1 = \sigma_1(t)dt,d\nu_2 = \sigma_2(t)dt,d\nu_3 = \sigma_3(t)dt$, and let $\mathcal S_{\nu_1,\mathbb{ H}}$, $\mathcal S_{\nu_2,\mathbb{ H}}$, and  $\mathcal S_{\nu_3,\mathbb{ H}}$ represent corresponding signal classes  generated using the set of diffeomorphisms $\mathbb{H}=\{h(t) = at + b: a,b \in \mathbb{R}, a > 0 \}.$ In short, three prototype signals are defined as a Gabor wave, a Sawtooth wave, and a Square wave, all  multiplied by a Gaussian window function, respectively. These prototypes are randomly translated and scaled. For the computer simulations shown below, $t \in [-0.5,5]$, and $a$ and $b$ are uniformly distributed in $[0.75,2]$ and $[-0.25, 0.25]$ respectively. A total of $N=500$ sample signals are generated with $250$ used for training and $250$ for testing. Randomly distributed Gaussian noise, with standard deviation of $0.02$ was added to each signal. The Fisher Linear Discriminant Analysis \cite{fisher1936} computed using the sklearn python package \cite{Codes2}, shows that classification accuracy on the test set using the data in original signal space is $32\%$, while the test set accuracy of the same classification algorithm applied to signals in SCDT space is $99\%$. The projections of the test set for both native signal space and SCDT space are shown in Figure \ref{fig:LDA_classification} below. From these figures, and from the test set classification accuracy, we can see the SCDT significantly enhances the ability of a linear classifier to operate correctly.

\begin{figure}[H]
	\centering
	\includegraphics[width=1.0\linewidth]{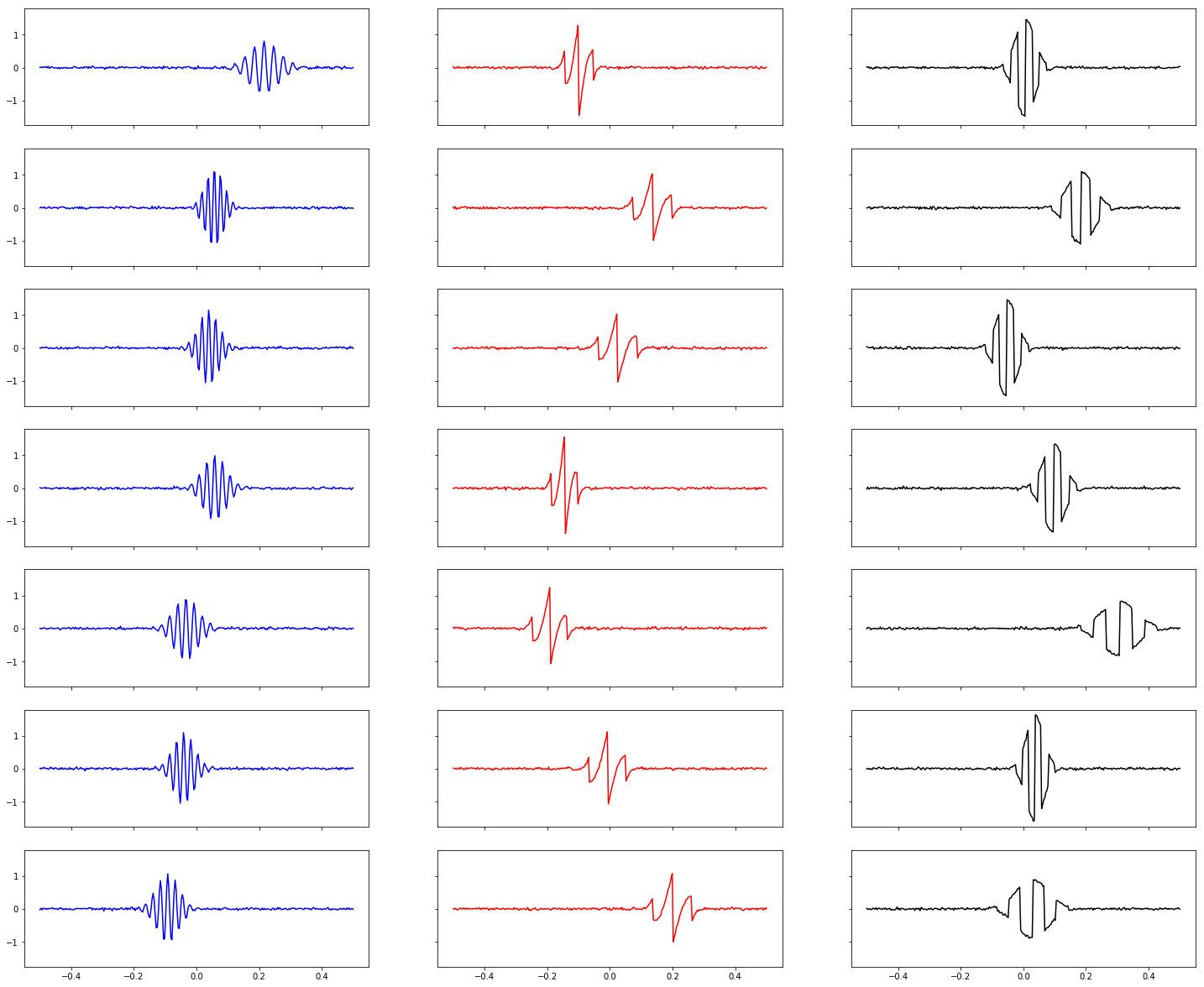}
	\caption{Three signal classes:  A Gabor wave, an apodized sawtooth wave, and an apodized square wave are randomly translated and scaled  to form three signal classes. Seven example training signals are shown per class. }
\label{fig:signals_for_classification}
\end{figure}

\begin{figure}[H]
	\centering
	\includegraphics[width=1.0\linewidth]{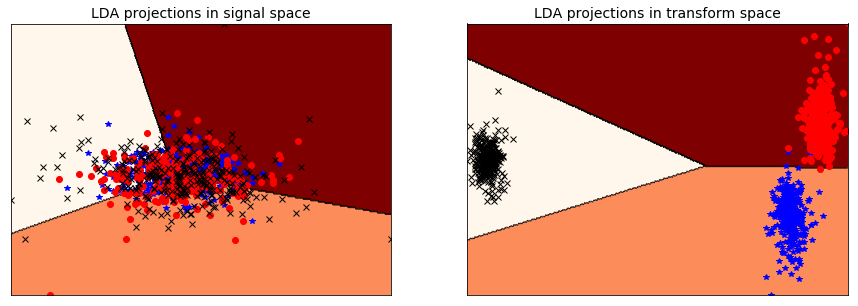}
	\caption{Classification of test signals (from the three classes depicted in Figure \ref {fig:signals_for_classification}): Projection  to LDA subspace learned from training data. Left panel:    The linear classification method is unsuccessful classifying signal data in its original space.  Right panel: Test data is much better separated in SCDT  space. }
\label{fig:LDA_classification}
\end{figure}

\section{Summary and Conclusions} 
\noindent


This paper extends the Cumulative Distribution Transform \cite{park2018cumulative} to signed measures of arbitrary mass, permitting the application of the technique to arbitrary one-dimensional signals. This extension significantly broadens the number of potential applications of the transform. The idea is based on viewing 1-D signals (measured data) as measures, and matching the measure corresponding to the signal to be transformed to a chosen reference measure. This matching is obtained using a \emph{push-forward} of the reference measure by a function derived from the cumulations of the reference and  signal measures. The operator that produces the push-forward function is what we call the Signed Cumulative Ditribution Transform (SCDT).  Signed measures are handled using the Jordan decomposition, where the positive and negative portions are handled separately and independently. Theorem \ref{bijection_signed_measure} shows that the mapping is bijective from the space of signed finite measures to the transform space. As such, the signed cumulative distribution transform described in this paper can be viewed as a mathematical signal representation method, with analytical forward and inverse operations, for arbitrary 1-D signals.

Following earlier work on the CDT \cite{park2018cumulative}, we also described several properties of the newly introduced SCDT.  Proposition \ref{Composition_Signed} states that for $g(t)$ a strictly increasing surjective function, the SCDT of a signal (measure) $\eta$ satisfying $F_\eta =F_\nu\circ g$ will be related to the SCDT of $\nu$ via $\Big((\eta^\pm)^\star,\|\eta\|\Big)=\Big(g^\inv\circ(\nu^\pm)^\star, \|\nu^\pm\|\Big)$. Simple corollaries of the proposition include the signal translation and scaling properties \ref{coro_translation}, \ref{coro_dilation}, describing that transformations $g(t)$ that shift the signal along the independent variable ($t$ in this case) become transformations that shift the signal in the dependent variable in transform space.

Proposition \ref{Composition_Signed} (composition) and corollaries \ref{coro_translation} (translation) and \ref{coro_dilation} (scaling/dilation) relate to the analysis of signals under rigid and non-rigid deformations (e.g., deformations in the independent variable time or space). Proposition \ref{convexity_Signed} describes a generative model for classes of signals under the presence of deformations and describes the necessary and sufficient conditions for such classes to be convex in SCDT space. Section \ref{sec:metric} describes a metric for 1D signals using the SCDT and Theorem \ref{linearSeparability} utilizes it, in combination with Proposition \ref{convexity_Signed} and the Hahn-Banach separation theorem to establish sufficient conditions for linear separability of such signal classes in SCDT spaces. Finally, a computational example application of the technique to classifying signals under random translation and dilations using a simple linear classifier is shown. 

The definition of the SCDT given in equations \eqref{one reference} and \eqref{IMU} is one of several possibilities. In the definition used here, we consider a positive, non trivial, reference measure $\mu_0$ to which the positive and negative components of the Jordan decomposition of the signal measure are matched. Another possibility would be to use a reference measure that also admits a decomposition $\mu_0 = \mu_0^+ - \mu_0^-$, and then replace equation \eqref{one reference} with a similar version that matches the corresponding positive and negative parts:    
\begin{equation}
        \TMP_{\mu_0}(\nu) = \left(\TMP_{\mu_0^+}(\nu^+), \TMP_{\mu_0^-}(\nu^-)\right) \notag
    \end{equation}
with the range of the transform being
 \begin{equation} 
        \mathcal{I}_{\mu_0}:=\{(f,r,g,s)\in \STPM^2:  \,     f_\#\mu_0^+ \perp g_\#\mu_0^- \, \text { for } r,s\in\R^+\}. \notag
    \end{equation}
The properties of the transform (e.g. invertibility, composition, translation, dilation, convexity, etc.) under this alternative definition would remain the same, although the proofs would be slightly altered. Naturally, this alternative definition would require both $\mu_0^+$ and $\mu_0^-$ to be non-trivial. 

In summary, this paper presents a new tool for representing arbitrary signals by matching their corresponding measures to a reference measure. As such, it enables the extraction and analysis of information related to rigid and non-rigid deformations of the signal, which are difficult to decode especially in nonlinear estimation and classification problems. Future work will include exploring the application of SCDT to a variety of signal estimation and classification problems, extension of the transform presented here to higher dimensional signals, as well as sampling, reconstruction, compression, and approximation problems.

\section{Proofs of results}
This section contains the proofs of our results. Some of the proofs rely on certain properties of the monotone generalized inverse whose properties and proofs are relegated to the appendix. 
\subsection{Proofs for Section \ref {TFM}}\

We start with several essential lemmas. The proof of Lemma \ref{Lema_aux1} can be found in \cite[Lemma 2.4]{santambrogio2015optimal} and also in the proof of \cite[Corollary 2.2]{thorpeNotes}. The proof of Lemmas \ref{lemma_pushforward} and \ref{OTNOtes+Ambrosio} combine the proofs of   \cite[Theorem 3.1]{ambrosio2003}, \cite[Theorem 2.5]{santambrogio2015optimal} and \cite[Corollary 2.2] {thorpeNotes}. We include them here for completeness, and because we extend them to $\RE$. In what follows, we will use $\mu_{Leb}|_{[0,1]}$ to denote the Lebesgue measure on $[0,1]$.

\begin{lemma}\label{Lema_aux1}
Let $\mu$ be a probability measure that does not give mass to atoms, and let  $F_{\mu}$ be its  CDF (see \eqref {CD}). Then $(F_{\mu})_{\#}\mu = \mu_{Leb}|_{[0,1]}$. As a consequence, for every $l \in [0,1],$ the set $I_l=\{x: F_{\mu}(x) = l\}$ is $\mu$ negligible. 
\end{lemma}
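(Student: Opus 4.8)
The plan is to prove the pushforward identity $(F_\mu)_\#\mu = \mu_{Leb}|_{[0,1]}$ directly on a generating family of sets, and then obtain the stated consequence about the level sets $I_l$ as an immediate corollary by evaluating the pushforward measure on singletons. Since both $(F_\mu)_\#\mu$ and $\mu_{Leb}|_{[0,1]}$ are finite Borel measures supported on $[0,1]$ (the range of $F_\mu$), and since the intervals $\{[0,l]: l\in[0,1]\}$ form a $\pi$-system generating the Borel $\sigma$-algebra of $[0,1]$, it suffices to verify that $(F_\mu)_\#\mu([0,l]) = l$ for every $l \in [0,1]$.

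First I would record that the no-atoms hypothesis forces $F_\mu$ to be continuous: $F_\mu$ is always non-decreasing and right-continuous, and its jump at any point $x$ equals $\mu(\{x\})$, which vanishes since $\mu$ charges no atom. In particular $\mu$ places no mass at $\pm\infty$, so $F_\mu$ runs continuously from $F_\mu(-\infty)=0$ to $F_\mu(+\infty)=1$.

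The core computation is to identify the set $F_\mu^{-1}([0,l]) = \{x\in\RE: F_\mu(x)\le l\}$ and compute its $\mu$-measure. By monotonicity this set is a lower interval $[-\infty, x_l]$, and I would take $x_l = F_\mu^\inv(l) = \inf\{x: F_\mu(x) > l\}$. Using continuity of $F_\mu$ I would argue that $F_\mu(x_l) = l$: left-continuity gives $F_\mu(x_l)\le l$ (each $x<x_l$ satisfies $F_\mu(x)\le l$), while right-continuity together with the definition of the infimum gives $F_\mu(x_l)\ge l$. Hence
\[
(F_\mu)_\#\mu([0,l]) = \mu\big(\{x: F_\mu(x)\le l\}\big) = \mu([-\infty, x_l]) = F_\mu(x_l) = l = \mu_{Leb}|_{[0,1]}([0,l]).
\]
Agreement on this $\pi$-system then yields $(F_\mu)_\#\mu = \mu_{Leb}|_{[0,1]}$.

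Finally, the consequence is immediate: since $I_l = F_\mu^{-1}(\{l\})$, we get $\mu(I_l) = (F_\mu)_\#\mu(\{l\}) = \mu_{Leb}|_{[0,1]}(\{l\}) = 0$, because singletons are Lebesgue-null. The main obstacle I anticipate is the careful bookkeeping of the level set on the extended line $\RE$: verifying that $\{F_\mu \le l\}$ is exactly the closed lower interval $[-\infty, x_l]$ (including the endpoint), that $F_\mu(x_l)=l$ even when $F_\mu$ is locally constant near $x_l$, and handling the degenerate endpoints $l=0$ and $l=1$ separately. Everything hinges on the continuity of $F_\mu$; without the no-atoms assumption the identity would fail precisely at the jump values of $F_\mu$.
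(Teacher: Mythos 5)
Your proposal is correct and follows essentially the same route as the paper's proof: both use continuity of $F_\mu$ (from the no-atoms hypothesis) to identify $F_\mu^{-1}([0,l])$ as a closed lower interval $[-\infty,x_l]$ with $F_\mu(x_l)=l$, conclude that the two measures agree on the generating family $\{[0,l]\}$, and then read off the consequence by evaluating the pushforward on singletons. Your explicit identification of $x_l$ as $F_\mu^{\inv}(l)$ and the left/right-continuity argument simply fill in a step the paper states more tersely.
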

\begin{proof}
Note that $F_{\mu}$ is continuous, as $\mu$ does not give mass to atoms. So, for all $y \in [0,1]$ the set $F_{\mu}^{-1}([0,y])$ is closed, in particular $F_{\mu}^{-1}([0,y]) = [-\infty,x_y],$ for some $x_y\in\RE$ with $F_{\mu}(x_y) = y.$ Now, for $y \in [0,1],$
    \begin{align*}
        (F_{\mu})_{\#}\mu([0,y]) &= \mu((F_{\mu})^{-1}[0,y]) \\ 
         &= F_{\mu}(x_y) = y=\mu_{Leb}|_{[0,1]}([0,y]).
    \end{align*}
    Hence, $(F_{\mu})_{\#} \mu = \mu_{Leb}|_{[0,1]} $ as Borel measures, since they coincide on every interval $[0,a]$, with $a\in[0,1]$. 
    As a consequence, for all $l\in [0,1]$,
    \begin{equation*}
        \mu\left(\{x: F_{\mu}(x) = l\}\right)=\mu(F_{\mu}^{-1}(\{l\}))=(F_{\mu})_{\#}\mu(\{l\})= \mu_{Leb}|_{[0,1]}(\{l\})=0.
    \end{equation*} 
\end{proof}

\begin{lemma}\label{lemma_pushforward}  Let $\mu, \nu$ be two probability measures on $\RE$, and assume that   $\mu$ does not give mass to atoms. Then,
 $\psi = F_\nu^\inv\circ F_{\mu}$ is non-decreasing and satisfies  $\psi_\# \mu = \nu$.
\end{lemma}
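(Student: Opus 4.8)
The plan is to split the statement into two parts. Monotonicity is immediate: $F_\mu$ is non-decreasing (being a CDF), and by the properties of the monotone generalized inverse recorded in Appendix~\ref{appendix}, $F_\nu^\inv$ is non-decreasing as well; since a composition of non-decreasing functions is non-decreasing, $\psi = F_\nu^\inv \circ F_\mu$ is non-decreasing. The substance of the lemma is the identity $\psi_\# \mu = \nu$.

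For this I would first exploit the functoriality of the push-forward together with Lemma~\ref{Lema_aux1}. Writing $\psi_\# \mu = (F_\nu^\inv)_\#\big((F_\mu)_\#\mu\big)$ and invoking Lemma~\ref{Lema_aux1} (which applies precisely because $\mu$ gives no mass to atoms) to get $(F_\mu)_\#\mu = \mu_{Leb}|_{[0,1]}$, the problem reduces to showing $(F_\nu^\inv)_\#\big(\mu_{Leb}|_{[0,1]}\big) = \nu$. This is the classical quantile (inverse-transform-sampling) statement, now for a general $\nu$ that may carry atoms, so no continuity of $F_\nu$ is available.

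Since by Proposition~\ref{characterization_meausres_on_RE} a measure on $\RE$ is determined by its CDF, it suffices to check that both sides assign the same mass to every set $[-\infty,y]$. I would compute $(F_\nu^\inv)_\#(\mu_{Leb}|_{[0,1]})([-\infty,y]) = \mu_{Leb}|_{[0,1]}(\{t : F_\nu^\inv(t)\le y\})$ and compare it with $F_\nu(y)$. The key observation, which follows from the defining formula $F_\nu^\inv(t)=\inf\{x: F_\nu(x)>t\}$ and the right-continuity of $F_\nu$ (see Appendix~\ref{appendix}), is the sandwich $\{t: t<F_\nu(y)\} \subseteq \{t: F_\nu^\inv(t)\le y\} \subseteq \{t: t\le F_\nu(y)\}$. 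Indeed, $t<F_\nu(y)$ forces $y$ into $\{x: F_\nu(x)>t\}$, so its infimum $F_\nu^\inv(t)$ is $\le y$; conversely, if $F_\nu^\inv(t)\le y$ then there are points of $\{x: F_\nu(x)>t\}$ arbitrarily close to this infimum from above, and monotonicity together with right-continuity of $F_\nu$ forces $F_\nu(y)\ge t$. Intersecting with $[0,1]$ traps $\{t\in[0,1]: F_\nu^\inv(t)\le y\}$ between $[0,F_\nu(y))$ and $[0,F_\nu(y)]$, which differ by a single point and hence have the same Lebesgue measure $F_\nu(y)$.

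Putting these together gives $(F_\nu^\inv)_\#(\mu_{Leb}|_{[0,1]})([-\infty,y]) = F_\nu(y) = \nu([-\infty,y])$ for every $y\in\RE$, so the two measures coincide and $\psi_\# \mu = \nu$. I expect the main obstacle to be the boundary case $t=F_\nu(y)$, where the naive equivalence between $\{F_\nu^\inv(t)\le y\}$ and $\{t\le F_\nu(y)\}$ can fail; the sandwich argument sidesteps this by isolating the ambiguity to a single point, which is Lebesgue-null and therefore invisible to the mass computation. All the monotone-generalized-inverse facts used here are exactly the ones collected in the appendix.
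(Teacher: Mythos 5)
Your proposal is correct and follows essentially the same route as the paper: monotonicity from composing non-decreasing maps, reduction via functoriality of the push-forward and Lemma~\ref{Lema_aux1} to the identity $(F_\nu^\inv)_\#\mu_{Leb}|_{[0,1]}=\nu$, and verification on the sets $[-\infty,y]$. The sandwich $\{t: t<F_\nu(y)\}\subseteq\{t: F_\nu^\inv(t)\le y\}\subseteq\{t: t\le F_\nu(y)\}$ that you prove inline is exactly the content of the paper's Proposition~\ref{geninv&Lebesgue2}, which its proof simply cites at the corresponding step.
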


\begin{proof}
By Proposition \ref {prop_inv}(i), the function $F_\nu^\inv$ is non-decreasing. Thus,   $\psi: \RE\to\RE$ is non-decreasing since it is a composition of two non-decreasing functions.
To show $\psi_\# \mu = \nu$, we prove  that 
    $$\nu=(F_\nu^\inv)_\# \mu_{Leb}|_{[0,1]}.$$
Then using $(F_{\mu})_\# \mu=\mu_{Leb}|_{[0,1]}$ from Lemma \ref{Lema_aux1}, and properties of the push-forward operator, we obtain the desired result.
Indeed, 
    for $y \in \RE,$ using Proposition \ref{geninv&Lebesgue2} 
    we get, 
    \begin{align*}
        (F_\nu^\inv)_\#\mu_{Leb} |_{[0,1]}([-\infty,y]) &= \mu_{Leb} |_{[0,1]}\left((F_{\nu}^\inv)^{-1}([-\infty,y])\right) \\
         &= \mu_{Leb} |_{[0,1]}\left(\{x\in \RE :   F_{\nu}^\inv(x) \leq y\}\right) \\
        &= \mu_{Leb} |_{[0,1]}\left(\{x\in\RE : \, x\leq F_{\nu}(y) \}\right) \\
        &= F_{\nu}(y). 
    \end{align*}
\end{proof}

\begin{lemma} \label{OTNOtes+Ambrosio}  Let $\mu, \nu$ be two probability measures on $\RE$, and assume that  $\mu$  does not give mass to atoms. If $\phi:\RE\to \RE$ is a  non-decreasing function such that $\nu=\phi_\# \mu$, then  $\phi=F_\nu^\inv\circ F_{\mu}$ $\mu$-$a.e$.
\end{lemma}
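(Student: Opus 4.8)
The plan is to show that $\psi := F_\nu^\inv\circ F_{\mu}$ dominates $\phi$ pointwise on $\RE$, and then to upgrade this one-sided comparison to $\mu$-a.e. equality by exploiting that both maps share the same push-forward $\nu$.

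First I would record the identity $F_\nu(t)=\mu(\{x\in\RE:\phi(x)\le t\})$ for every $t\in\RE$, which is just $\nu([-\infty,t])=(\phi_\#\mu)([-\infty,t])$ together with the fact that $\{\phi\le t\}$ is an interval (hence Borel) because $\phi$ is non-decreasing. Fix $x_0\in\RE$ and take any $t<\phi(x_0)$. If $\phi(x)\le t$, then $\phi(x)<\phi(x_0)$, which forces $x<x_0$ by monotonicity; therefore $\{\phi\le t\}\subseteq[-\infty,x_0)$ and $F_\nu(t)\le \mu([-\infty,x_0))\le F_\mu(x_0)$. Consequently no $t<\phi(x_0)$ lies in the set $\{z:F_\nu(z)>F_\mu(x_0)\}$, so by Definition \ref{MGI} we obtain $\psi(x_0)=F_\nu^\inv(F_\mu(x_0))\ge\phi(x_0)$. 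This inequality holds for every $x_0$, including the degenerate case $F_\mu(x_0)=1$, in which the set is empty and $\psi(x_0)=\infty$.

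Next I would invoke Lemma \ref{lemma_pushforward}, which gives $\psi_\#\mu=\nu$, so that both $\phi$ and $\psi$ push $\mu$ forward to $\nu$. Choosing a bounded, continuous, strictly increasing function $h:\RE\to\R$ (for instance the extension of $\arctan$ with $h(\pm\infty)=\pm\pi/2$), the change-of-variables formula for push-forwards yields
\[
\int_\RE (h\circ\psi)\,d\mu=\int_\RE h\,d\nu=\int_\RE(h\circ\phi)\,d\mu,
\]
whence $\int_\RE\big(h\circ\psi-h\circ\phi\big)\,d\mu=0$. Since $\psi\ge\phi$ everywhere and $h$ is increasing, the integrand is non-negative, so it must vanish $\mu$-a.e.; the strict monotonicity of $h$ then forces $\phi=\psi$ $\mu$-a.e., which is the claim.

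I expect the main obstacle to be the careful verification of the pointwise bound $\psi\ge\phi$ through the generalized inverse, in particular handling flat stretches of $F_\nu$ and the boundary values $F_\mu(x_0)\in\{0,1\}$ together with the possibility $\phi(x_0)=\pm\infty$ on $\RE$. Once that pointwise domination is secured, the integration step against a bounded strictly increasing test function is routine and, thanks to the boundedness of $h$ and the fact that $\mu,\nu$ are probability measures, avoids any integrability issues stemming from the points $\pm\infty$.
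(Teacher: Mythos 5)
Your proof is correct, and while it follows the same two-step skeleton as the paper's --- establish a one-sided pointwise comparison between $\phi$ and $\psi=F_\nu^\inv\circ F_{\mu}$, then use the fact that both maps push $\mu$ forward to $\nu$ to integrate the (signed-definite) difference to zero --- both steps are executed differently, and in each case your execution is the more robust one. For the comparison step, the paper proves $\phi\ge\psi$ on $\supp\mu$ minus a countable exceptional set, after first modifying $\phi$ on a countable set to make it right-continuous; you instead prove the reverse inequality $\psi\ge\phi$ at \emph{every} point of $\RE$, with no modification of $\phi$ and no exceptional set, via the direct computation $F_\nu(t)=\mu(\{x:\phi(x)\le t\})\le\mu([-\infty,x_0))\le F_{\mu}(x_0)$ for all $t<\phi(x_0)$, which shows the set $\{z:F_\nu(z)>F_{\mu}(x_0)\}$ contains no point below $\phi(x_0)$ and hence bounds its infimum $\psi(x_0)$ from below by $\phi(x_0)$. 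Either direction of the inequality feeds equally well into the second step, so this is partly a matter of taste, but your version is shorter and dispenses with the right-continuity bookkeeping. For the integration step, the paper integrates the identity function, writing $\int\phi\,d\mu-\int\psi\,d\mu=\int y\,d\nu-\int y\,d\nu=0$; on $\RE$ this presupposes that $\int y\,d\nu$ is well defined and finite, which can fail (for instance if $\nu$ charges both $+\infty$ and $-\infty$, or lacks a first moment). Your substitution of a bounded, continuous, strictly increasing $h$ (an extended $\arctan$) gives $\int(h\circ\psi-h\circ\phi)\,d\mu=0$ with a non-negative integrand and no integrability caveats, and the injectivity of $h$ then yields $\phi=\psi$ $\mu$-a.e. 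This last modification quietly repairs a genuine weakness in the paper's own argument.
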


\begin{proof}
Let $\phi$ be any non-decreasing function such that $\nu = \phi_\# \mu$ and assume, (possibly modifying $\phi$ on a countable set) that $\phi$ is right-continuous. Let 
    \begin{equation*}
        T:=\{s\in \supp \mu: (s,s')\cap \supp \mu = \emptyset \text{ for some } s'>s\}
    \end{equation*}
    and notice that $T$ is at most countable (since we can index $T$ with a family of pairwise disjoint open intervals). We claim $\phi\geq F_\nu^\inv\circ F_{\mu}$ on $\supp \mu - T$. Indeed,  for $s\in \supp \mu - T$ and $s'>s$ we have 
    \begin{equation*}
        \nu([-\infty,\phi(s')]) = \mu(\phi^{-1}([-\infty,\phi(s')])) \geq \mu([-\infty,s'])>\mu([-\infty,s]),
    \end{equation*}
    where last inequality follows from the fact that $(s,s')\cap \supp \mu \neq \emptyset$. Thus, we have that $F_\nu(\phi(s')) > F_{\mu}(s)$. Using the fact that $F_\nu$ and $F_\mu$ are non-decreasing and right continuous, we can apply  Proposition \ref {basic}(i) to get  $\phi(s')\geq F_\nu^\inv\circ F_{\mu}(s), a.e.$-$\mu$. Taking $s'\to s$ we obtain $\phi(s)\geq F_\nu^\inv\circ F_{\mu}(s)$ for every $s\in \supp \mu - T$. In particular, using Lemma \ref{lemma_pushforward} we get
     \begin{equation*}
         \int (\phi - F_\nu^\inv \circ F_{\mu})(x) \ d\mu(x) = 
         \int \phi(x) \ d\mu(x) - \int (F_\nu^\inv \circ F_{\mu})(x) \ d\mu(x) =
         \int y \ d\nu(y) - \int y \ d\nu(y) = 0.
     \end{equation*}
    But $\phi-F_\nu^\inv \circ F_{\mu}\ge 0, \; a.e.-\mu$. Therefore, $\phi = F_\nu^\inv \circ F_{\mu}$ $\mu$-a.e.
    
\end{proof}

\subsubsection{Proofs of Theorem \ref{Prob_bijection} and Corollary \ref{prob_restrict}.}
\begin{proof}[Proof of Theorem \ref{Prob_bijection}]
 Injectivity holds since, by  Lemma \ref{OTNOtes+Ambrosio},  if $\widehat{\nu}=\widehat{\eta} $   $\mu_0$-a.e., then $$\nu=\widehat{\nu}_\# \mu_0=\widehat{\eta}_\# \mu_0=\eta.$$
 For surjectivity, if $\phi:\RE\to\RE$ is a non-decreasing $\mu_0$-a.e function, 
 then the push-forward $\nu := \phi_\#\mu_0$ of $\mu_0$  by $\phi$ is a probability measure on $\RE$. By Lemma \ref{OTNOtes+Ambrosio}, for any probability measure $\nu$, the transform $\widehat{\nu} = F_\nu^\inv\circ F_{\mu_0}$  is a unique non-decreasing $\mu_0-a.e.$ function which satisfies $\widehat{\nu}_\# \mu_0=\nu$. Therefore, $\phi=\widehat{\nu}$ $\mu_0$-a.e., i.e. $\phi$ lies in the image of the transform.
\end{proof}

\begin{proof}[Proof of Corollary \ref{prob_restrict}]
Note that $\mathcal{P}(\Omega)\subseteq \mathcal{P}(\RE)$,  $\ND(\Omega)\subseteq \ND$, and that $\TMP_{\mu_0}:\mathcal{P}(\RE)\to \ND$ is a bijection. Thus, the restriction $\TMP_{\mu_0}|_{\mathcal{P}(\Omega)}$ is trivially one to one. \par
The surjectivity of $\TMP_{\mu_0}|_{\mathcal{P}(\Omega)}:\mathcal{P}(\Omega)\rightarrow \ND(\Omega)$ follows from the surjectivity of $\TMP_{\mu_0}$ if we can show that $\TMP_{\mu_0}^{-1}(\ND(\Omega))\subset \mathcal{P}(\Omega)$.

Indeed, given $\varphi\in\ND(\Omega)$, since $\varphi(x)\in\Omega$ $\mu_0$-a.e., we have that for each Borel set $E\subseteq \RE$, 
\begin{align*}
    \varphi_\#\mu_0(E)&=\mu_0\Big(\varphi^{-1}(E)\Big)\\
    &=\mu_0(\{x: \, \varphi(x)\in E\})\\
    &=\mu_0(\{x: \, \varphi(x)\in E\cap\Omega\})=
    \varphi_\#\mu_0(E\cap\Omega).
\end{align*}
Thus,  $(\TMP_{\mu_0})^{-1}(\varphi)=\varphi_\#\mu_0 \in \mathcal{P}(\Omega)$. 
\end{proof}


\subsubsection{Proofs of Theorem \ref{bij_non-nomnalized} and Corollary \ref{FMsub}}
\begin{proof}[Proof of Theorem \ref{bij_non-nomnalized}]
Note that if a function $f:\RE\to\RE$ is a  non-decreasing  $\mu_0$-a.e., it is also  non-decreasing  $\left(\frac{1}{\|\mu_0\|}\mu_0\right)$-a.e.

In order to prove injectivity, let $\nu$ and $\eta$ be two non-zero, finite positive measures on $\RE$ such that $\nu^*=\eta^*$ and $\|\nu\|=\|\eta\|$. 
Hence, by Theorem \ref{Prob_bijection},  $\frac{1}{\|\nu\|}\nu =\frac{1}{\|\eta\|}\eta$ and 
we obtain $\nu=\eta$.  If $\|\nu\|=0$  and $\|\eta\|\ne 0, $ then from the Definition \ref {PMT}, $\widehat \nu\ne \widehat \eta.$

In order to prove surjectivity, consider the pair $(f,r)$ where $f:\RE\to\RE$ is a function non-decreasing  $\mu_0$-a.e. and $r$ is a positive real number. Let $\nu:=rf_\# \frac{\mu_0}{\|\mu_0\|}$, then $\nu$ is a finite positive Borel measure with $\|\nu\|=r$ since $f_\# \big(\frac{\mu_0}{\|\mu_0\|}\big)$ is a probability measure. In addition,  
$$\nu^*=F_{\frac{\nu}{\|\nu\|}}^\inv\circ F_{\frac{\mu_0}{\|\mu_0\|}}$$
and by Lemma \ref{OTNOtes+Ambrosio} it is the unique non-decreasing $\mu_0$-a.e. function that satisfies 
$$\nu^*_\# \big( \frac{\mu_0}{\|\mu_0\|} \big) =  \frac{1}{r}\nu.$$ 
Thus, $f=\nu^*$ $\mu_0$-a.e., and  the inverse transform is given by $$(f,r)\mapsto rf_\# \big(\frac{\mu_0}{\|\mu_0\|}\big). $$ 

If $(f,r)=(0,0)$, then, from Definition \ref {PMT}, it is the transform of the zero measure. 
\end{proof}

\begin{proof}[Proof of Corollary \ref{FMsub}] Corollary \ref{FMsub} follows from Theorem \ref{bij_non-nomnalized} as in the proof of Corollary \ref{prob_restrict}. 

\end{proof}

\subsubsection{Proofs of Theorem \ref{bijection_signed_measure} and Corollary \ref{corollary_signed_measure}} 

\begin{proof} [Proofs of Theorem \ref{bijection_signed_measure}]Given a signed measure $\nu$, $\TMP_{\mu_0}(\nu)$ is well-defined since there exists only one pair of mutually singular positive measures, $\nu^+$ and $\nu^-$, such that $\nu=\nu^+-\nu^-$.

If $\TMP_{\mu_0}(\nu)=\TMP_{\mu_0}(\eta)$, then $\TMP_{\mu_0}(\nu^+)=\TMP_{\mu_0}(\eta^+)$ and $\TMP_{\mu_0}(\nu^-)=\TMP_{\mu_0}(\eta^-)$ (where the operators  $\TMP_{\mu_0}$ should be understood from the context). Thus, injectivity follows by applying Theorem \ref{Prob_bijection} separately on the positive and negative part.

In addition, given $\nu\in\SM(\RE)$, $\TMP_{\mu_0}(\nu)\in\mathcal{I}_{\mu_0}$ since
\begin{equation*}
    (\nu^+)^*_\#\mu=\frac{\nu^+}{\|\nu^+\|} \perp(\nu^-)^*_\#\mu=\frac{\nu^-}{\|\nu^-\|}.
\end{equation*}
In order to prove the surjectivity, consider  $(f,r,g,s)\in\mathcal{I}_{\mu_0}$. Define 
\begin{equation}\label{inv_sign}
\nu:=rf_\# \Big(\frac{\mu_0}{\|\mu_0\|}\Big) - sg_\# \Big(\frac{\mu_0}{\|\mu_0\|}\Big)\in \SM(\RE).    
\end{equation}
For $r,s\in\R^+$, since $f_\#\mu_0 \perp g_\#\mu_0$, then 
$$ rf_\# \Big(\frac{\mu_0}{\|\mu_0\|}\Big)\perp sg_\# \Big(\frac{\mu_0}{|\mu_0|}\Big).$$
Both measures, $rf_\# \Big(\frac{\mu_0}{\|\mu_0\|}\Big)$ and $sg_\# \Big(\frac{\mu_0}{\|\mu_0\|}\Big)$, are positive measures and therefore they are a Jordan decomposition for $\nu$, that is 
$$\nu^+=rf_\# \Big(\frac{\mu_0}{\|\mu_0\|}\Big) \quad  \text{ and } \quad \nu^-=sg_\# \Big(\frac{\mu_0}{\|\mu_0\|}\Big).$$
In particular,  $\nu^+(\RE)=r$ and $\nu^-(\RE)=s$. By Theorem  \ref{Prob_bijection}, 
\begin{equation*}
    f= \left(\frac{1}{r}\nu^+\right)^* \qquad \text{ and }  \qquad g= \left(\frac{1}{s}\nu^-\right)^*.
\end{equation*}
Thus, $\TMP_{\mu_0}(\nu)=(f,r,g,s)$,
and it is clear that  the inverse transform is given by \eqref{inv_sign}.
Finally, if $s=0$ and $r>0$ (resp. $r=0$ and $s>0$) the proof reduces to the one of Theorem \ref{bij_non-nomnalized}, and the point $(0,0,0,0)$ is reached by the zero measure by definition.
\end{proof}

\begin{proof}[Proof of Corollary \ref{corollary_signed_measure}]
The proof of Corollary \ref{corollary_signed_measure} follows from Theorem \ref{bijection_signed_measure} as in Corollary \ref{FMsub}. 
\end{proof}

\subsection{Proofs of Section \ref {POF}}
The following Lemma is useful for the proofs of this section.
\begin{lemma} \label{Decomposition}
Let $\nu,\eta$ be two signed measures on $\RE$ such that $F_{\eta}(x) = F_{\nu}(g(x)),$ for some strictly increasing bijection $g: \RE \rightarrow \RE.$ If $\{\Omega_{\nu^+},\Omega_{\nu^-}\}$ is a Hahn decomposition of $\nu$, then the set made by the pre-images $\{g^{-1}(\Omega_{\nu^+}),g^{-1}(\Omega_{\nu^-})\}$ is a Hahn decomposition of $\eta.$ Thus, $F_{\eta^\pm}(x) = F_{\nu^\pm}(g(x)).$ 
\end{lemma}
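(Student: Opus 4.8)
The plan is to first upgrade the hypothesis $F_\eta(x) = F_\nu(g(x))$ from a statement about cumulative distribution functions to a set-level identity $\eta(E) = \nu(g(E))$ valid for every Borel set $E \subseteq \RE$. Since $g$ is a strictly increasing surjection of $\RE$ onto itself, it can have no jump discontinuities (a jump would leave a gap in the range, contradicting surjectivity), so $g$ is in fact a homeomorphism of $\RE$; in particular both $g$ and $g^{-1}$ are Borel measurable and carry Borel sets to Borel sets. Evaluating on half-open intervals gives $\eta((a,b]) = F_\eta(b)-F_\eta(a) = F_\nu(g(b))-F_\nu(g(a)) = \nu\big((g(a),g(b)]\big) = \nu(g((a,b]))$, so the finite signed measures $\eta$ and $E \mapsto \nu(g(E))$ agree on the generating $\pi$-system of half-open intervals (equivalently, their CDFs coincide). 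By the characterization of signed measures through their CDFs (Proposition \ref{characterization_meausres_on_RE}), applied to the Jordan parts, the identity $\eta(E) = \nu(g(E))$ then holds for all Borel $E$.

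Next I would verify directly that $\{g^{-1}(\Omega_{\nu^+}), g^{-1}(\Omega_{\nu^-})\}$ is a Hahn decomposition of $\eta$. The partition property is immediate from bijectivity: $g^{-1}$ commutes with unions, intersections, and complements, so $g^{-1}(\Omega_{\nu^+}) \cup g^{-1}(\Omega_{\nu^-}) = \RE$ and $g^{-1}(\Omega_{\nu^+}) \cap g^{-1}(\Omega_{\nu^-}) = \emptyset$. For positivity, take any Borel $A \subseteq g^{-1}(\Omega_{\nu^+})$; then $g(A) \subseteq \Omega_{\nu^+}$, and the set identity yields $\eta(A) = \nu(g(A)) \ge 0$ because $\Omega_{\nu^+}$ is $\nu$-positive. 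The same argument with $\Omega_{\nu^-}$ shows $g^{-1}(\Omega_{\nu^-})$ is $\eta$-negative, which establishes the Hahn decomposition.

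Finally, I would read off the CDF identity for the Jordan parts. Using the Hahn decomposition just obtained, $\eta^+(\cdot) = \eta\big(\cdot \cap g^{-1}(\Omega_{\nu^+})\big)$, so
\[
F_{\eta^+}(x) = \eta\big([-\infty,x] \cap g^{-1}(\Omega_{\nu^+})\big) = \nu\big(g([-\infty,x]) \cap \Omega_{\nu^+}\big),
\]
where I used injectivity of $g$ to distribute $g$ over the intersection and surjectivity to simplify $g(g^{-1}(\Omega_{\nu^+})) = \Omega_{\nu^+}$. Since $g$ is an increasing bijection, $g([-\infty,x]) = [-\infty,g(x)]$, whence the right-hand side equals $\nu([-\infty,g(x)] \cap \Omega_{\nu^+}) = \nu^+([-\infty,g(x)]) = F_{\nu^+}(g(x))$; the computation for $\eta^-$ is identical. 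The only subtleties to watch are the measurability of $g(A)$ and the passage from interval-level to set-level agreement of signed measures, and both are handled cleanly once one observes that $g$ is a homeomorphism of $\RE$, so this is where I expect the only genuine care to be needed.
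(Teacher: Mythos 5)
Your proof is correct and follows essentially the same route as the paper's: upgrade the CDF identity to the push-forward identity $g_\#\eta=\nu$ (equivalently $\eta(E)=\nu(g(E))$ for Borel $E$), check positivity/negativity of the preimage sets directly, and then compute $F_{\eta^\pm}$ by the same chain of equalities $\eta([-\infty,x]\cap g^{-1}(\Omega_{\nu^\pm}))=\nu([-\infty,g(x)]\cap\Omega_{\nu^\pm})$. The only difference is that you explicitly justify the interval-to-Borel upgrade and the Borel measurability of $g(A)$ (by noting $g$ is a homeomorphism of $\RE$), points the paper asserts without proof.
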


\begin{proof}
 Throughout this proof, we will use the fact that if $\nu$ and $\eta$ satisfy the hypothesis of the Lemma \ref{Decomposition}, then, $F_{\eta}(x) = F_{\nu}(g(x)) $ if and only if $g_{\#} \eta = \nu.$ In particular, under the hypothesis of the Lemma, $\eta(\RE)=\nu(\RE).$
 
 Clearly, $\RE = g^{-1}(\Omega_{\nu^+})\cup g^{-1}(\Omega_{\nu^-})$ and $g^{-1}(\Omega_{\nu^+}) \cap g^{-1}(\Omega_{\nu^-}) = \phi.$  Now we show that, $g^{-1}(\Omega_{\nu^+})$ is a positive set for $\eta,$ i.e. $\eta(E) \ge 0,$ for all $\eta$ - measurable sets $E \subset g^{-1}(\Omega_{\nu^+}).$
Since $E \subset g^{-1}(\Omega_{\nu^+})$ and $g$ is a bijection, there exists $\nu$ - measurable $H \subset \Omega_{\nu^+},$ such that $E = g^{-1}(H).$ Thus, $\eta(E) = \eta(g^{-1}(H)) = \nu(H) \ge 0 .$ Analogously, we can show $g^{-1}(\Omega_{\nu^-})$ is a negative set for $\eta$.\\
Now, we show $F_{\eta^+}(x) = F_{\nu^+}(g(x)).$ A similar argument would follow for its negative counterpart.
\begin{align*}
    F_{\eta^+}(x) &= \eta^+([-\infty,x]) \\
    &= \eta([-\infty,x] \cap g^{-1}(\Omega_{\nu^+})) \\
    &= \eta(g^{-1} \circ g ([-\infty,x]) \cap g^{-1}(\Omega_{\nu^+})) \\
    &= \eta(g^{-1} ( g ([-\infty,x]) \cap \Omega_{\nu^+})) \\
    &= \nu(g([-\infty,x]) \cap \Omega_{\nu^+}) \\
    &= \nu([-\infty,g(x)] \cap \Omega_{\nu^+}) \\
    &= \nu^+ ([-\infty,g(x)]) = F_{\nu^+}(g(x)).
\end{align*}
\end{proof}
\begin{proof}[Proof of Proposition \ref{Composition_Signed}] Using the relation, $F_{\eta}(x)=F_{\nu}(g(x))$, we see that $F_{\nu}\circ g$ is a cumulation.  Also, by definition of $\eta $ and $\nu$, we have $F_{\eta^+}(x)-F_{\eta^-}(x)=(F_{\nu}\circ g)_{+}(x)-(F_{\nu}\circ g)_{-}(x)$ and by Lemma \ref{Decomposition}, we get, $F_{\eta^+}(x) = F_{\nu^+}(g(x))$ and $F_{\eta^-}(x) = F_{\nu^-}(g(x)).$ In particular, $\|\eta^\pm\|=\|\nu^\pm\|$.     
Since g is strictly increasing, and using Proposition \ref {prop_inv}, the SCDT of $\eta$ with respect to $\mu_0$ when $\|\nu^+\|\ne 0$ and $\|\nu^-\|\ne 0$ is given by 

\begin{align*}
    \TMP_{\mu_0}( \eta) &=\left(F_{\frac {\eta^+}{\|\eta^+\|}}^\inv\circ F_{\frac {\mu_0}{\|\mu_0\|}}, \|\eta^+\|, F_{\frac {\eta^-}{\|\eta^-\|}}^\inv\circ F_{\frac {\mu_0}{\|\mu_0\|}},\|\eta^-\|\right) \\ &= \left(\big(F_{\frac {\nu^+}{\|\nu^+\|}}\circ g\big)^\inv\circ F_{\frac {\mu_0}{\|\mu_0\|}} , \|\eta^+\| , \big(F_{\frac {\nu^+}{\|\nu^+\|}}\circ g\big)^\inv\circ F_{\frac {\mu_0}{\|\mu_0\|}},\|\eta^-\|\right) \\ &=
    \left(g^\inv \circ F^\inv_{\frac {\nu^+}{\|\nu^+\|}}\circ  F_{\frac {\mu_0}{\|\mu_0\|}}, \|\nu^+\|,g^\inv \circ F^\inv_{\frac {\nu^-}{\|\nu^-\|}}\circ  F_{\frac {\mu_0}{\|\mu_0\|}},\|\nu^-\|\right) \\ &=
    \left(g^\inv \circ ( \nu^+)^*,\|\nu^+\|,g^\inv \circ (\nu^-)^*,\|\nu^-\|\right).
\end{align*}
The cases when $\|\nu^+\|=0$ or $\|\nu^-\|=0$ can be done a similar fashion.

\end{proof} 

We adopt a proof similar to the one in \cite{aldroubi2020partitioning}.
\begin{proof}[Proof of Proposition \ref{convexity_Signed}] 
Given $\nu\in\SM(\RE)$ we consider 
$\mathcal S_{\nu,\mathbb {H}}$ as is \eqref{S_nu,H}. By the definition of $\mathcal S_{\nu,\mathbb {H}}$, Proposition \ref{Composition_Signed} and Theorem \ref{bijection_signed_measure}
we have that 
\begin{equation}\label{S_hat}
  \omega \in \widehat{\mathcal S_{\nu,\mathbb {H}}} \Leftrightarrow \omega= (h^{-1}\circ (\nu^+)^*,\|\nu^+\|,h^{-1}\circ (\nu^-)^*, \|\nu^-\|) \, \text{ for some } h\in \mathbb {H}.  
\end{equation}

Assume that $\mathbb{H}^{-1}$ is convex and fix $\nu\in \SM(\RE)$. Let $\eta_h,\eta_g$ be two arbitrary elements in $S_{\nu,\mathbb{H}}$, that is, they are defined by  $F_{\eta_h}=F_\nu\circ h$ and 
$F_{\eta_g}=F_\nu\circ g$ for 
$h,g\in \mathbb{H}$ (here we are using the characterization of measures according Proposition \ref{characterization_meausres_on_RE} from the Appendix). For any $\alpha\in[0,1]$, applying Proposition \ref{Composition_Signed} we have
\[
\begin{split}
 \alpha ({\eta}^\pm_h)^*+(1-\alpha)( {\eta}_g^\pm)^*=&\alpha \,  h^{-1}\circ ({\nu}^\pm)^*+(1-\alpha) \,  g^{-1}\circ ({\nu}^\pm)^*\\
 =&\big(\alpha h^{-1} +(1-\alpha) g^{-1}\big)\circ ( {\nu}^\pm)^*.
\end{split}
\]
In addition $\alpha \|\nu^\pm\|+(1-\alpha)\|\nu^\pm\|=\|\nu^\pm\|$. Thus, $\widehat {\mathcal S_{\nu,\mathbb{H} }}$ is convex. 

For the converse statement, let $h^{-1},g^{-1}\in \mathbb{H}^{-1}$ and $\alpha\in[0,1]$. For $\nu\in \SM(\RE)$, assuming that $\widehat {\mathcal S_{\nu,\mathbb{H} }}$ is convex we have that 
\[\Big(\big(\alpha h^{-1} +(1-\alpha) g^{-1}\big)\circ ( {\nu}^+)^*,\|\nu^+\|,\big(\alpha h^{-1} +(1-\alpha) g^{-1}\big)\circ ({\nu}^-)^*, \|\nu^-\|\Big)\in \widehat {\mathcal S_{\nu,\mathbb{H}} }
\]
Thus, by the characterization of $\widehat {\mathcal S_{\nu,\mathbb{H}}}$ given  by \eqref{S_hat} we obtain that $\alpha h^{-1} +(1-\alpha) g^{-1}$ coincides with a function in $\mathbb{H}^{-1}$ on the range of $({\nu^{\pm}})^*$. 
Taking 
the family of target measures $\{\nu_T\}_{T\in \R}\cup\{\delta_{\infty}, \delta_{-\infty}\}$, where $\delta_{\pm\infty}$ are Dirac measures centered at $\pm \infty$, and $\nu_T$ is defined by 
$$\nu_T(E):=\int_{E}\chi_{[0,1]}(x-T) \, d\mu_{Leb}(x) \qquad \forall E\subset \RE \text{ Borel subset},$$
and assuming $\widehat {\mathcal S_{\nu_T,\mathbb{H} }}$ convex for every $T\in\R$, 
by Corollary \ref{coro_dilation}
we can conclude that $\alpha h^{-1} +(1-\alpha) g^{-1}\in \mathbb {H}^{-1}$. Thus $\mathbb {H}^{-1}$ is convex. 
\end{proof}


Proposition \ref{prop_wass} and its proof are well-known \cite{thorpeNotes}. We include them in this paper for readability and completeness.  
\begin{proof}[Proof of Proposition \ref{prop_wass}]
It is well known that (cf. \cite{santambrogio2015optimal,villani2003topics})
\begin{equation}\label{wass_2}
  d_{W^2}(\nu,\eta)=\left(\int_{0}^1 |F^\inv_\nu(t)-F^\inv_{\eta}(t)|^2 \, dt\right)^{1/2} . 
\end{equation}
Then,  by a change  of variables and using  Lemma \ref{Lema_aux1}, we obtain 
\begin{align*}
    \|\TMP_{\mu_0}(\nu)  -\TMP_{\mu_0}(\eta) \|_{L^2(\mu_0)}^2 &=\int_\R |F_\nu^\inv\circ F_{\mu_0}-F_{\eta}^\inv\circ F_{\mu_0}|^2 \, d\mu_0 \notag\\
    &=\int_0^1 |F_\nu^\inv(t)-F_{\eta}^\inv(t)|^2 dt\notag\\
    &=\left(d_{W^2}(\nu,\eta)\right)^2 
\end{align*}
\end{proof}

 \begin{proof}[Proof of Theorem \ref {linearSeparability}]
Since $\widehat {\mathcal D}_\nu, \widehat{\mathcal D}_\eta$ are finite  subsets of a normed space $\Big(L^2(\mu_0)\times\R\Big)^2$, their convex hulls $\text{conv}\big(\widehat {\mathcal D}_\nu\big)$, and $\text{conv}\big(\widehat {\mathcal D}_\eta\big)$ are compact.  In addition, since 
$\widehat {\mathcal D}_\nu, \widehat{\mathcal D}_\eta$ are subsets of the convex sets $\widehat{\mathcal S_{\nu,\mathbb{ H}}}$ and $ \widehat{\mathcal S_{\eta,\mathbb{ H}}}$, then $\text{conv}\big(\widehat {\mathcal D}_\nu\big)$,   $\text{conv}\big(\widehat {\mathcal D}_\eta\big)$ are also subsets of $\widehat{\mathcal S_{\nu,\mathbb{ H}}}$ and $ \widehat{\mathcal S_{\eta,\mathbb{ H}}}$. Finally, since $\mathcal S_{\nu,\mathbb{ H}}$ and $ \mathcal S_{\eta,\mathbb{ H}}$ are disjoint and $\TMP_{\mu_0}$ is one to one, we get that  $\text{conv}\big(\widehat {\mathcal D}_\nu\big)$, and $\text{conv}\big(\widehat {\mathcal D}_\eta\big)$ are disjoint non-empty convex and compact sets. By the Hahn-Banach Separation Theorem, they can be separated by a linear functional $f$. In particular, $f$ separates $\widehat {\mathcal D}_\nu, \widehat{\mathcal D}_\eta$.
\end{proof}

\section{Appendix}\label{appendix}

\subsection{Measures on $\RE$}

We recall that we are using the standard topology on $\RE$, which is given by the standard topology on $\R$ and a base of open neighborhoods of $\infty$ is $\{ (a,\infty)\cup\{\infty\}\}_{a\in \R}$ and analogously for the point $-\infty$.
Then, the Borel $\sigma$-algebra of $\RE$ is generated by $\{-\infty\}$, $\{\infty\}$, $(-\infty,a]$ with $a\in\R$. 


\begin{proposition}\label{characterization_meausres_on_RE}
 Let $\mu$ be a finite positive Borel measure on $\RE$, then the cumulation of $\mu$ 
 defined by $F_{\mu}(x):=\mu([-\infty,x])$ for all $x\in\RE$,
 is non-decreasing,  right-continuous, and
    \begin{equation} \label{LimF}
        F_{\mu}(-\infty)  \geq 0 \qquad \text{ and } 
       \qquad F_{\mu}(\infty)<\infty.
    \end{equation}
    Conversely, for any right-continuous, non-decreasing function $F$ on $\RE$ satisfying \eqref{LimF}, there is a unique finite positive Borel measure $\mu$ on $\RE$ such that 
    \begin{equation*}
        F(x)= \mu([-\infty,x])\text{ for all } x\in\RE.
    \end{equation*}
\end{proposition}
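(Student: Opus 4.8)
The plan is to treat the two directions separately, the forward implication being routine and the converse (existence and uniqueness of $\mu$) carrying the real content. For the forward direction, suppose $\mu$ is a finite positive Borel measure. Monotonicity of $F_\mu$ is immediate from $[-\infty,x]\subseteq[-\infty,y]$ whenever $x\le y$, combined with monotonicity of $\mu$. For right-continuity at a point $x$, I would take any sequence $x_n\downarrow x$ and observe that $[-\infty,x_n]\downarrow[-\infty,x]$; since $\mu$ is finite, continuity from above gives $F_\mu(x_n)\to F_\mu(x)$ (at $x=\infty$ right-continuity is vacuous). The two boundary conditions are then read off directly: $F_\mu(-\infty)=\mu(\{-\infty\})\ge 0$ because $\mu$ is positive, and $F_\mu(\infty)=\mu(\RE)<\infty$ because $\mu$ is finite.

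For the converse, the strategy is to reduce to the classical Lebesgue--Stieltjes theorem on $\R$ and then install the two possible atoms at $\pm\infty$ by hand. Given $F$, I first set $a_{-}:=F(-\infty)\ge 0$, which will be the mass at $-\infty$. Right-continuity of $F$ at $-\infty$ means precisely that $\lim_{x\to-\infty,\,x\in\R}F(x)=F(-\infty)$, so the function $\widetilde F(x):=F(x)-a_{-}$ on $\R$ is non-decreasing, right-continuous, non-negative, bounded by $F(\infty)-F(-\infty)$, and tends to $0$ as $x\to-\infty$. The classical theorem then yields a unique finite Borel measure $\mu_\R$ on $\R$ with $\mu_\R((-\infty,x])=\widetilde F(x)$. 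Setting $a_{+}:=F(\infty)-\lim_{x\to\infty,\,x\in\R}F(x)\ge 0$ for the mass at $+\infty$, I would define
$$\mu:=\mu_\R+a_{-}\,\delta_{-\infty}+a_{+}\,\delta_{\infty},$$
viewing $\mu_\R$ as a measure on $\RE$ assigning no mass to $\{\pm\infty\}$. A short computation checks $\mu([-\infty,x])=F(x)$ in the three cases $x=-\infty$, $x\in\R$, and $x=\infty$, using in the last case that $\mu_\R(\R)=\lim_{x\to\infty}F(x)-F(-\infty)$.

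Uniqueness I would obtain from the standard $\pi$--$\lambda$ argument: the collection $\{[-\infty,x]:x\in\RE\}$ is closed under intersection, contains $\RE=[-\infty,\infty]$, and generates the Borel $\sigma$-algebra of $\RE$ (from these sets one recovers $\{-\infty\}=[-\infty,-\infty]$, then each $(-\infty,a]$, and finally $\{\infty\}=\RE\setminus\bigcup_n[-\infty,n]$). Two finite measures that agree with $F$ on this generating $\pi$-system therefore coincide. I expect the main obstacle to be bookkeeping rather than conceptual: one must verify that right-continuity on $\RE$ forces the correct behaviour of the left limits at $\pm\infty$, so that the atom masses $a_\pm$ are well defined and non-negative and so that no mass is lost or double-counted at the two endpoints. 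Once that is settled, the whole statement rests on the already-classical Lebesgue--Stieltjes construction on $\R$.
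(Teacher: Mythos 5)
Your proof is correct, and the forward direction matches the paper's essentially verbatim (monotonicity of $\mu$, continuity from above for right-continuity, and reading off the boundary values). The converse, however, follows a genuinely different route. The paper constructs $\mu$ directly as a push-forward: it sets $s=F(\infty)$, defines the generalized inverse $T(\alpha)=\inf\{x:F(x)\ge\alpha\}$, and takes $\mu:=T_\#\mu_{{Leb}|_{[0,s]}}$, verifying $\mu([-\infty,x])=F(x)$ by computing $T^{-1}([-\infty,x])$; uniqueness is then argued via the Carath\'eodory extension theorem applied to the algebra generated by $\{-\infty\}$ and half-open intervals. You instead peel off the atom at $-\infty$, invoke the classical Lebesgue--Stieltjes theorem on $\R$ as a black box for $\widetilde F = F - F(-\infty)$, reinstall atoms of mass $a_\pm$ at the endpoints, and settle uniqueness by a $\pi$--$\lambda$ argument on the $\pi$-system $\{[-\infty,x]:x\in\RE\}$ (correctly noting that it contains $\RE$ itself, so total masses agree). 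Both arguments are sound. The paper's construction has the advantage of being self-contained and of rehearsing exactly the push-forward-of-Lebesgue-measure mechanism that drives the rest of the paper (Lemmas \ref{Lema_aux1}--\ref{OTNOtes+Ambrosio}); yours is shorter modulo the cited classical theorem and makes the role of the two possible atoms at $\pm\infty$ completely explicit, which is arguably clearer bookkeeping. Your one point of care --- that right-continuity at $-\infty$ in the topology of $\RE$ is exactly the statement $\lim_{x\to-\infty,\,x\in\R}F(x)=F(-\infty)$, which makes $a_-$ well defined and $a_+\ge 0$ --- is handled correctly and is in fact slightly more carefully justified than the corresponding step in the paper's forward direction.
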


This is an extension of the so called Lebesgue-Stieltjes Measure.

\begin{proof}\
    Direct part: 
    \begin{itemize}
        \item $F_{\mu}$ is non-decreasing because given $x,y\in \RE$ with $x\leq y$, since $[-\infty,x]\subseteq[-\infty,y]$,
        $$F_{\mu}(x)=\mu([-\infty,x])\leq \mu([-\infty,y])=F_{\mu}(y).$$
        \item $F_{\mu}(-\infty)=\mu(\{-\infty\})\geq 0$ (since $\mu$ is positive).
        \item Let $x\in\R$ 
        $$F_{\mu}(x)=\mu([-\infty,x])=\lim_{n\to\infty}\mu([-\infty,x+\frac{1}{n}])=\lim_{n\to\infty}F_{\mu}(x+\frac{1}{n})$$
        so $F_{\mu}$ is right-continuous on $\R$. Indeed, $F_{\mu}$ is right-continuous on $\RE$: If $x=\infty$ there is nothing to prove. If $x=-\infty$, then $x+a=-\infty$ for all $a\in\R$ therefore, $\displaystyle \lim_{z\to-\infty} F_\mu(z)=F_\mu(-\infty)$.
    \end{itemize}
    
    For the converse part, let $F:\RE\to\RE$ be a right-continuous, non-decreasing function satisfying \eqref{LimF}.
    Denoting
    \begin{equation*}
        r=F(-\infty)  \qquad \text{ and } \qquad  s= F(\infty),
    \end{equation*}
    define
    \begin{gather*}
        T:\RE\to\RE\\
        T(\alpha):=\inf\{x: F(x)\geq\alpha\}
    \end{gather*}
    Then, $T$ is non-decreasing, and therefore it is a measurable function.
    Notice that for each $x\in\RE$
    $$\mu_{{Leb}_{|_{[0,s]}}}\left(T^{-1}([-\infty,x])\right)=F(x)$$
    In particular, if $x=\infty$, $T^{-1}([-\infty,\infty])\cap[0,s]=\RE\cap[0,s]=[0,s]$ and if $x=-\infty,$ $T^{-1}(\{-\infty\})\cap[0,s]=[0,r]$ 
    because $T^{-1}(\{-\infty\})=\{y: \, \inf\{x: \,  F(x)\geq y\}=-\infty\}=[-\infty,r]$ since the range of $F$ is a subset of $[r,s]$.
    
    We define
    $$\mu:=T_\#\mu_{{Leb}_{|_{[0,s]}}}$$
    and, for each $x\in\RE$, we obtain
    \begin{equation*}
        \mu([-\infty,x])=F(x).
    \end{equation*}

    Notice that
    \begin{equation*}
     \mu(\{\infty\})=s-\lim_{x\to \infty} F(x) \qquad \text{ and } \qquad \mu(\{-\infty\})=r=\lim_{x\to -\infty} F(x) 
    \end{equation*}
    The uniqueness of the measure $\mu$ is a consequence of the Carath\'eodory Extension Theorem, which asserts that any finite measure on an algebra $\mathcal{A}$ extends in a unique
    way to a measure on the $\sigma$-algebra generated by $\mathcal{A}$. Indeed,
    the equation 
        $$\mu((a,b])=F(b)-F(a) \qquad \forall a,b\in\RE$$
    implies that there is only one extension to the algebra of sets generated by $\{-\infty\}$ and  half-open intervals $(a,b]$ with $a,b\in\RE$, and so there is only one extension to the $\sigma$-algebra generated by these sets, which is $\mathcal{B}(\RE)$.
\end{proof}

\subsection{Monotone Generalised Inverse} 
In this section, we introduce the monotone generalized inverse for functions defined on the extended real line $\RE$, and we provide some of its relevant properties. In particular, the monotone generalized inverse of any function is always  non-decreasing, and if a function $F$ is continuous and strictly increasing then its monotone generalized inverse $F^\inv$ and its standard inverse $F^{-1}$ coincide. This inverse and its properties are essential in defining and studying the transport transform on $\RE$. It has already been introduced in connection to transport theory but only for functions on the real line \cite{thorpeNotes}, and  some its properties are well-known \cite{Embrechts:13}. However, we also need other properties that we derive in this appendix. 

\begin{definition}
 For a function $F:\RE \to \RE$, the monotone generalized inverse of $F$ is the function $F^{\inv}:\RE\to\RE$  defined as
    \begin{equation*}
        F^{\inv}(y)  = \inf\{x \in \RE: F(x)>y\}.
    \end{equation*}
    In particular (since $\inf \emptyset = \infty$),  $F^{\inv}(\infty) = \infty.$
\end{definition}
The following properties of $F^{\inv}$ are used in this paper.
\begin{remark} \label{RemFMI} For any function $F:\RE \to \RE$ 
\begin{itemize}
    \item $F^\inv(\infty)=\infty$
    \item If $F^\inv(y)=\infty$, then, $\{s\in \R: F(s)>y\}=\emptyset$.
    \item  If  $F$ is non-decreasing function on $\RE$, then $F^\inv(y)= -\infty$, then $\{s\in \R: F(s)>y\}=\R$.
    \item  If $F$ is non-decreasing function on $\RE$, then \begin{equation*}
        F^{\inv}(y)  = \inf\{x \in \R: F(x)>y\}=\inf\{x \in \RE: F(x)>y\}.
    \end{equation*}
\end{itemize}
\end{remark}
\begin{proposition} \label{basic}
For any function $F:\RE \to \RE$ 
    \begin{itemize}
        \item[(i)] $F(x)>y \Rightarrow x\geq F^\inv(y).$ \label{eq1}
         \item[(ii)] $x<F^\inv(y)  \Rightarrow    F(x)\leq y.$ \label{eq2}
    \end{itemize} 
\end{proposition}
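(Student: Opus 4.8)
The plan is to read both assertions directly off the definition $F^{\inv}(y)=\inf\{x\in\RE:F(x)>y\}$, using only that an infimum is a lower bound for the set it indexes. Part (ii) will then be obtained as the logical contrapositive of part (i), so essentially all the work sits in a single one-line observation.

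For (i), I would argue as follows. Suppose $F(x)>y$. Then by definition $x$ belongs to the set $S_y:=\{z\in\RE:F(z)>y\}$. Since $F^{\inv}(y)=\inf S_y$ and the infimum of a subset of $\RE$ is a lower bound for that subset, every element of $S_y$ is $\geq F^{\inv}(y)$; in particular $x\geq F^{\inv}(y)$. This is the whole of (i). The only point deserving a moment's care is that $\RE$ is a complete lattice, so $\inf S_y$ exists in $\RE$ for every $S_y$ (with the convention $\inf\emptyset=\infty$ recorded in the definition); there is no issue of the infimum failing to exist or escaping the space.

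For (ii), I would take the contrapositive of (i). The negation of the conclusion $x\geq F^{\inv}(y)$ is exactly $x<F^{\inv}(y)$, and the negation of the hypothesis $F(x)>y$ is exactly $F(x)\leq y$. Hence the implication $\big(F(x)>y \Rightarrow x\geq F^{\inv}(y)\big)$ is logically equivalent to $\big(x<F^{\inv}(y)\Rightarrow F(x)\leq y\big)$, which is precisely (ii). Equivalently, one may argue directly by contradiction: if $x<F^{\inv}(y)$ but $F(x)>y$, then (i) forces $x\geq F^{\inv}(y)$, contradicting $x<F^{\inv}(y)$.

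I do not anticipate any genuine obstacle here; the statement is a formal consequence of the definition and holds for \emph{arbitrary} $F:\RE\to\RE$, with no monotonicity or continuity hypotheses needed. The one thing I would be deliberate about is not silently assuming $F$ is non-decreasing (several neighbouring facts in the appendix do use that hypothesis), since the strength of this proposition is precisely its validity for any function, and the proof uses nothing beyond the lower-bound property of the infimum.
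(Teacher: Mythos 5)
Your proof is correct and is essentially the paper's own argument: part (i) is read off from the definition of $F^{\inv}$ as an infimum (hence a lower bound) of the set $\{z:F(z)>y\}$, and part (ii) is obtained as the contrapositive of (i). The extra care you take about $\inf$ existing in the complete lattice $\RE$ and about not assuming monotonicity is sound but adds nothing beyond what the paper's one-line proof already relies on.
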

\begin{proof}
   Since $F(x)>y,$ (i) follows from the definition of $F^\inv,$ and
   (ii) is the contrapositive of (i).

\end{proof}

\begin{proposition} \label{non-dec}
If $F : \RE \to \RE$  is non-decreasing then
    \begin{itemize}
        \item[(i)] $F(x)=y \Rightarrow x\leq F^\inv(y) \qquad 
        (\text{in particular } \quad  x\leq F^\inv(F(x))).$ \label{i} 
        \item[(ii)] $F(x)<y \Rightarrow x \leq F^\inv(y).$ \label{eq4} 
        \item[(iii)] $x>F^\inv(y)  \Rightarrow F(x)>y  .$ 
    \end{itemize}
\end{proposition}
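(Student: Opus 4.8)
The plan is to prove (iii) first, directly from the definition of $F^\inv$ as an infimum together with the monotonicity of $F$, and then to obtain (i) and (ii) immediately as special cases of the contrapositive of (iii). This ordering is convenient because (iii) is the only statement that genuinely uses the order-completeness of $\RE$, while (i) and (ii) are then purely formal consequences.

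For (iii), I would write $S = \{t \in \RE : F(t) > y\}$, so that by definition $F^\inv(y) = \inf S$. Assume $x > F^\inv(y) = \inf S$. Then $x$ cannot be a lower bound of $S$, for otherwise $x \leq \inf S$ by the defining property of the infimum as the greatest lower bound in the complete lattice $\RE$, contradicting $x > \inf S$. Hence there exists $t \in S$ with $t < x$. Since $t \in S$ we have $F(t) > y$, and since $F$ is non-decreasing and $t < x$ we get $F(x) \geq F(t) > y$, which is exactly (iii). The degenerate case $S = \emptyset$ gives $F^\inv(y) = \inf\emptyset = \infty$, making the hypothesis $x > \infty$ vacuous; and the endpoints $x = \pm\infty$ need no separate treatment, since $F$ is defined on all of $\RE$ and the same inequality chain applies.

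For (i) and (ii), I would take the contrapositive of (iii), namely $F(x) \leq y \Rightarrow x \leq F^\inv(y)$. Each of the hypotheses $F(x) = y$ and $F(x) < y$ entails $F(x) \leq y$, so both yield $x \leq F^\inv(y)$, giving (i) and (ii) at once; specializing (i) to $y = F(x)$ produces the parenthetical assertion $x \leq F^\inv(F(x))$. The only delicate step in the whole argument is the extraction of a point $t \in S$ with $t < x$ from the strict inequality $x > \inf S$, which is precisely where the greatest-lower-bound property of the infimum and the monotonicity of $F$ are both invoked; the remainder is bookkeeping with the order on $\RE$, so I do not expect any substantive obstacle.
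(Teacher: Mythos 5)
Your proof is correct and uses exactly the same ingredients as the paper's---the greatest-lower-bound property of the infimum in $\RE$ and the monotonicity of $F$---with only the direction of contraposition reversed: the paper proves (i) and (ii) directly by observing that $F(x)\leq y$ makes $x$ a lower bound of $\{s : F(s)>y\}$ (hence $x\leq F^\inv(y)$), and then obtains (iii) as the contrapositive of their conjunction, whereas you prove (iii) directly (extracting $t\in S$ with $t<x$ from $x>\inf S$) and deduce (i) and (ii) by contraposition. This is essentially the same approach, and both versions are equally short and complete.
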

\begin{proof}
\noindent
\begin{itemize}
    \item[(i)] Since $F(x)=y$ and $F$ is non-decreasing, $x<s$ for any $s\in U=\{s:F(s)>y\}$. Thus, (i) follows from definition of $F^\inv.$ 
    \item[(ii)]  If $F(x)<y$. Then, since $F$ is non-decreasing,  $x<s$ for any $s$ such that $F(s)>y$. Thus, $x\le F^\inv(y).$   
     \item[(iii)]  From (i) and (ii) imply that if $F(x)\le y$, then $x\le F^\inv(y),$ which is the contrapositive of (iii).

\end{itemize}
\end{proof}

\begin{proposition} \label{non-dec&rightcts}
If $F : \R \cup \{-\infty\}\to \RE $ is right-continuous and non-decreasing then, 
\begin{itemize}
        \item[(i)] $x=F^\inv(y)  \Rightarrow  F(x)\geq y 
        \qquad (\text{in particular } \quad  y\leq F(F^\inv(y)))$
        \item[(ii)] $F(x)<y \Rightarrow x<F^\inv(y).$ \label{eq6}
        \item[(iii)] $x\geq F^\inv(y)  \Rightarrow F(x)\geq y.$ \label{eq7}
    \end{itemize} 
\end{proposition}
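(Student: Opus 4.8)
The plan is to prove (i) directly from right-continuity together with the monotone structure of the super-level set, then deduce (ii) from (i) and Proposition \ref{non-dec}(ii), and finally obtain (iii) as the contrapositive of (ii). Only (i) genuinely uses right-continuity; the remaining two parts are pure logical manipulation once (i) is in hand.

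For (i), I would first record that since $F$ is non-decreasing the super-level set $U := \{s : F(s) > y\}$ is an up-set, so that $x := F^\inv(y) = \inf U$ and every real $s > x$ already lies in $U$: if $s > \inf U$ then $s$ is not a lower bound for $U$, hence some $u \in U$ satisfies $u < s$, giving $F(s) \ge F(u) > y$. The hypothesis $x = F^\inv(y)$ presupposes that $F(x)$ is defined, so $x \in \R \cup \{-\infty\}$ and in particular $x \ne \infty$; thus there are reals decreasing to $x$. Choosing such a sequence $s_n \downarrow x$ (for instance $s_n = x + 1/n$ when $x \in \R$, or $s_n = -n$ when $x = -\infty$) and invoking right-continuity of $F$ at $x$ yields $F(x) = \lim_n F(s_n) \ge y$, since each $F(s_n) > y$. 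The parenthetical ``in particular'' assertion $y \le F(F^\inv(y))$ is just the instance $x = F^\inv(y)$.

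For (ii), suppose $F(x) < y$. Proposition \ref{non-dec}(ii) already gives $x \le F^\inv(y)$, so it only remains to exclude equality: if we had $x = F^\inv(y)$, then part (i) would force $F(x) \ge y$, contradicting $F(x) < y$. Hence $x < F^\inv(y)$. Finally, (iii) is precisely the contrapositive of (ii), since $x \ge F^\inv(y)$ negates $x < F^\inv(y)$ and $F(x) \ge y$ negates $F(x) < y$.

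The main obstacle is (i), which is the one place where right-continuity is indispensable; without it $F$ could jump to a value strictly below $y$ exactly at the infimum $x$, breaking the conclusion. The only technical care required is the treatment of the endpoint $x = -\infty$, which is handled because right-continuity is assumed on all of $\R \cup \{-\infty\}$, while the case $x = \infty$ never arises since $F(x)$ must be defined for the statement to apply.
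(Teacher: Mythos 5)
Your proposal is correct and follows essentially the same route as the paper: part (i) via the infimum characterization of the super-level set plus right-continuity (with the same separate treatment of $x=-\infty$), part (ii) by combining Proposition \ref{non-dec}(ii) with (i) to rule out equality, and part (iii) as the contrapositive of (ii). Your explicit observation that every real $s>F^\inv(y)$ lies in the super-level set is just a slightly more packaged version of the paper's $\varepsilon$-argument.
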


\begin{proof}
\noindent
\begin{itemize}
    \item[(i)] If $x=F^\inv(y)\in\R$, by definition of infimum, for all $\varepsilon>0$, there exists $s_\varepsilon$ such that $F(s_\varepsilon)>y$ and $s_\varepsilon< F^\inv(y)+\varepsilon=x+\varepsilon$. Since, $F$ is non-decreasing and right-continuous, we have $y\leq F(x)$. If $x=F^\inv(y)= -\infty$, then $\{s\in \R: F(s)>y\}\supset \R$. Therefore, by right continuity at $-\infty$, $F(-\infty)\geq y$.   To see that (i) is not valid if we allow $\RE$ to be the domain of $F$, consider any non-decreasing function $F$ that satisfies $F(\infty)=1$. Then we have that $F^\inv(2)=\infty$. However, $F(\infty)=1<2$. 
    \item[(ii)] If $F(x)<y$  then by \ref{non-dec}(ii), $x\leq F^\inv(y)$. If $x= F^\inv(y)$ then by (i), $F(x)\geq y$ which contradicts $F(x)<y$.  
    \item[(iii)]  This implication is the contrapositive of (ii) above. 
\end{itemize}
\end{proof}

\begin{proposition} \label{strict_inc}
If $F : \RE \to \RE$ is strictly increasing then
    \begin{align}
        F(x)=y \Rightarrow x=F^\inv(y) \qquad (\text{in particular } \quad  x= F^\inv(F(x)))\label{eq8}
    \end{align}
\end{proposition}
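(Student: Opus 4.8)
The plan is to establish the two inequalities $x \le F^\inv(y)$ and $F^\inv(y) \le x$ separately and then combine them. Since $F$ is strictly increasing it is in particular non-decreasing, so the hypothesis $F(x)=y$ together with Proposition \ref{non-dec}(i) immediately yields $x \le F^\inv(y)$. This disposes of one direction with no additional work, and it is precisely here that I would import the already-proved monotone case rather than redoing it.

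For the reverse inequality I would exploit strict monotonicity directly through the defining infimum. Recall from Definition \ref{MGI} that $F^\inv(y)=\inf\{s\in\RE : F(s)>y\}$. Because $F$ is strictly increasing and $F(x)=y$, every $s$ with $s>x$ satisfies $F(s)>F(x)=y$; hence $\{s\in\RE : s>x\}\subseteq\{s\in\RE : F(s)>y\}$. Taking infima, and using that a superset has an infimum no larger than that of a subset, gives $F^\inv(y)\le\inf\{s : s>x\}=x$. Combining the two inequalities yields $x=F^\inv(y)$, and the parenthetical statement follows by substituting $y=F(x)$.

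The only point requiring care is the behaviour at the endpoints $\pm\infty$ of $\RE$, where the set $\{s : s>x\}$ may be empty or may fail to attain its infimum, so I would verify that the infimum argument covers these uniformly. If $x=\infty$, strict monotonicity forces $F(s)<y$ for every $s<\infty$, so $\{s : F(s)>y\}=\emptyset$ and $F^\inv(y)=\inf\emptyset=\infty=x$ by the convention in Definition \ref{MGI}. If $x=-\infty$, then $F(s)>y$ for all $s>-\infty$, so the set is all of $\RE\setminus\{-\infty\}$, whose infimum is $-\infty=x$. Thus no genuine case analysis beyond this sanity check is needed, and I expect the superset-infimum step to be the only substantive ingredient of the argument.
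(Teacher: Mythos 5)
Your proof is correct and follows essentially the same route as the paper: the paper shows the set equality $\{s: F(s)>y\}=\{s: s>x\}$ and takes infima, while you prove one inclusion directly and obtain the reverse inequality by citing Proposition \ref{non-dec}(i), which amounts to the other inclusion. The endpoint checks you add are sound but not strictly necessary, since $\inf\{s\in\RE: s>x\}=x$ holds uniformly on $\RE$.
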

\begin{proof}
If $F(x)=y$ then      $\{s\in\RE:F(s)>y\} = \{s\in\RE:s>x\}$.
Taking infimum on both sides we get $F^\inv(y)=x$.
\end{proof}

\begin{proposition}\label{prop_inv} For a function $F : \RE \to \RE,$ the following hold:
    \begin{enumerate}[(i)]
        \item \label{inv_rcont} $F^\inv$ is right-continuous and non-decreasing. 
        \item   $(F^\inv)^\inv=F$ if and only if $F$ is non-decreasing and right-continuous and $F(\infty)=\infty$.
        In other words,  $F$ is non-decreasing and right-continuous if and only if $F(x)=\inf\{t: F^\inv(t)>x\}$.
        \item \label{comp_leq} If $F,G:\RE\to\RE$ are non-decreasing, then $(F\circ G)^\inv\leq G^\inv\circ F^\inv$.
        \item \label{comp_inv}Let  $F,G:\RE\to\RE$ be non-decreasing. If 
        $G$ is strictly increasing,
        then $(F\circ G)^\inv= G^\inv\circ F^\inv$. 
        \label{eq12}
    \end{enumerate}
\end{proposition}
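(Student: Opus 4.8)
The plan is to dispatch the four items in order, leaning on the elementary implications already collected in Remark \ref{RemFMI} and Propositions \ref{basic}, \ref{non-dec}, \ref{non-dec&rightcts}, \ref{strict_inc}. For (i), monotonicity is immediate: if $y_1\le y_2$ then $\{x:F(x)>y_2\}\subseteq\{x:F(x)>y_1\}$, so taking infima reverses the inclusion and gives $F^\inv(y_1)\le F^\inv(y_2)$. For right-continuity at a point $y_0<\infty$ I would use that $F^\inv$ is non-decreasing, so $\lim_{y\to y_0^+}F^\inv(y)=\inf_{y>y_0}F^\inv(y)\ge F^\inv(y_0)$, and the only content is the reverse inequality. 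If $F^\inv(y_0)=\infty$ it is automatic; otherwise, given $\varepsilon>0$, I pick $x_\varepsilon$ with $F(x_\varepsilon)>y_0$ and $x_\varepsilon<F^\inv(y_0)+\varepsilon$, and observe that for every $y\in(y_0,F(x_\varepsilon))$ the point $x_\varepsilon$ witnesses $F^\inv(y)\le x_\varepsilon<F^\inv(y_0)+\varepsilon$; letting $\varepsilon\to 0$ closes the gap.

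For (ii), the forward direction is bookkeeping: if $(F^\inv)^\inv=F$, then $F$ is itself a monotone generalized inverse (of $F^\inv$), hence non-decreasing and right-continuous by (i), and $F(\infty)=(F^\inv)^\inv(\infty)=\infty$ by Remark \ref{RemFMI}. For the converse I would prove two inclusions after writing $(F^\inv)^\inv(x)=\inf\{t:F^\inv(t)>x\}$. On the one hand, Proposition \ref{non-dec&rightcts}(ii) gives $t>F(x)\Rightarrow F^\inv(t)>x$, so $(F^\inv)^\inv(x)\le F(x)$; on the other, Proposition \ref{basic}(ii) gives $F^\inv(t)>x\Rightarrow t\ge F(x)$, so $(F^\inv)^\inv(x)\ge F(x)$. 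The point $x=\infty$ is treated separately using $F(\infty)=\infty$ together with $F^\inv(\infty)=\infty$. The ``in other words'' reformulation is just unwinding the definition of $(F^\inv)^\inv$.

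For (iii), I would note that $F\circ G$ is non-decreasing and apply Proposition \ref{non-dec}(iii) to $F$: $G(x)>F^\inv(y)\Rightarrow F(G(x))>y$, which is the inclusion $\{x:G(x)>F^\inv(y)\}\subseteq\{x:F(G(x))>y\}$; taking infima yields $(F\circ G)^\inv(y)\le G^\inv(F^\inv(y))$. For (iv) the inequality from (iii) already supplies one half, so only $(F\circ G)^\inv(y)\ge G^\inv(F^\inv(y))$ remains. Proposition \ref{basic}(i) gives the near-inclusion $F(G(x))>y\Rightarrow G(x)\ge F^\inv(y)$, with the wrong, non-strict inequality, and closing this gap is where strict monotonicity of $G$ enters. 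I would argue by contradiction: setting $a=(F\circ G)^\inv(y)$ and $b=G^\inv(F^\inv(y))$ and assuming $a<b$, I use that $\{x:F(G(x))>y\}$ is an up-set (since $F\circ G$ is non-decreasing) to locate two distinct points $x_1<x_2<b$ in it; the displayed implication forces $G(x_i)\ge F^\inv(y)$, while $x_i<b$ forces $G(x_i)\le F^\inv(y)$, so $G(x_1)=G(x_2)=F^\inv(y)$, contradicting injectivity of the strictly increasing $G$. Hence $a=b$.

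The only real obstacle is this boundary case $G(x)=F^\inv(y)$ in part (iv), which is precisely what strict monotonicity of $G$ excludes; a one-line counterexample with $G$ constant shows the hypothesis cannot be dropped. I would take care to justify that the two witnesses $x_1,x_2$ can always be separated below $b$ because $\RE$ is densely ordered, so that the injectivity contradiction is genuinely available.
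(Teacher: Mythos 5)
Your proposal is correct, and its skeleton coincides with the paper's: the same four-part dispatch, the same monotonicity-of-infima argument and the same witness $x_\varepsilon$ for right-continuity in (i), and the same single inclusion from Proposition \ref{non-dec}(iii) for (iii). You diverge in two local arguments, both validly. For the converse half of (ii), the paper proves $(F^\inv)^\inv(x)\le F(x)$ by contradiction, choosing $y_0$ strictly between $F(x_0)$ and $(F^\inv)^\inv(x_0)$ and invoking right-continuity to contradict the definition of the infimum; you obtain the same inequality directly from the inclusion $\{t: t>F(x)\}\subseteq\{t: F^\inv(t)>x\}$, which is exactly Proposition \ref{non-dec&rightcts}(ii) read as a set inclusion. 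This is shorter and uses the identical lemma, so nothing is lost. For (iv), the paper gets the lower bound from the chain $\{x: F(G(x))>y\}\subseteq\{x: G(x)\ge F^\inv(y)\}\subseteq\{x: x\ge G^\inv(F^\inv(y))\}$, where the second inclusion absorbs the boundary case $G(x)=F^\inv(y)$ via Proposition \ref{strict_inc}; you instead argue by contradiction with two witnesses $x_1<x_2$ strictly below $G^\inv(F^\inv(y))$, forcing $G(x_1)=G(x_2)=F^\inv(y)$ against the injectivity of $G$. Both exploit strict monotonicity at precisely the same spot; the paper's route is a one-line inclusion, while yours is longer but makes explicit (including via your constant-$G$ counterexample) why the hypothesis cannot be dropped. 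One shared cosmetic omission: in the right-continuity argument of (i), the bound $x_\varepsilon<F^\inv(y_0)+\varepsilon$ is meaningless when $F^\inv(y_0)=-\infty$; that case needs the same witness argument restated with an arbitrary $M\in\R$ in place of $F^\inv(y_0)+\varepsilon$ (the paper's own proof has the identical gap, so this is not a defect relative to it).
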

\begin{proof}

\begin{enumerate}[(i)]
    \item Let $y_1,y_2$ be such that $y_1<y_2$. Then 
    \[
        \{x:F(x)>y_2\}\subseteq \{x:F(x)>y_1.\}
    \]
    Taking the infimum on both sides we get that $F^\inv$ is non-decreasing.
    
    Let $\epsilon >0$ be given. For $\Delta>0$, set $U_{\Delta}:=\{x : F(x)> y+\Delta \}$ and $U:=\{x : F(x)> y\}$. Since $F^\inv$ is non-decreasing, if $\Delta_1<\Delta_2$, then $F^\inv(t+\Delta_1)\le F^\inv(t+\Delta_2)$. 
    Choose $x\in U$ such that $x-F^\inv(y)<\epsilon$, and $\delta>0$ so small that $x\in U_\delta$. Then for any $\Delta<\delta$, we have
    \[
        F^\inv(y)\le F^\inv(y+\Delta)\le F^\inv(y+\delta)\le x.
    \]
    Thus, 
    \[
        0\le F^\inv(y+\Delta)-F^\inv(y)\le F^\inv(y+\delta)-F^\inv(y)\le x-F^\inv(y)<\epsilon .
    \]
Hence $\lim\limits_{\Delta\to 0+}F^\inv(y+\Delta)=F^\inv(y)$.

    \item If $(F^\inv)^\inv=F$, then,  by Part (i), $F$ is non-decreasing and right-continuous.  In addition by Remark \ref {RemFMI}, $F(\infty)=\infty$.

    Conversely, let $x\in \RE$. Then using Proposition \ref {basic} (ii) we get set inclusion
    \[
    \{t: F^\inv(t)>x\}\subseteq \{t:F(x)\le t\}.
    \]
    Thus,
    \[
    (F^\inv)^\inv(x)=\inf \{t: F^\inv(t)>x\} \ge \inf \{t:F(x)\le t\}=F(x).
    \]
   To prove that $(F^\inv)^\inv(x)\le F(x)$ for all $x\in \RE$, first assume that there exists $x_0 \in \RE-\{\infty\}$ such that $(F^\inv)^\inv(x_0) > F(x_0)$. For such an $x_0$ it must be that $(F^\inv)^\inv(x_0)>-\infty$. Let $y_0\in\R$ be such that
   $F(x_0)<y_0<(F^\inv)^\inv(x_0)$. Since $F$ is right-continuous, there exists $\epsilon>0$ such that $F(x) <y_0$ for all $x\in [x_0,x_0+\epsilon)$. In addition since $F$ is non-decreasing and right-continuous, it follows from Proposition \ref {non-dec&rightcts} (ii) that $x <F^\inv(y_0)$ for all $x\in [x_0,x_0+\epsilon)$.
   This last inequality, by definition of $(F^\inv)^\inv$, implies that $(F^\inv)^\inv(x)\le y_0.$ But  $(F^\inv)^\inv(x_0)>y_0$ leading to a contradiction.  Finally, since $(F^\inv)^\inv(\infty)=\infty$, and by assumption, $F(\infty)=\infty$  we get part (ii).
   
\item Since $F$ and $G$ are non-decreasing, from Proposition \ref {non-dec}(iii) we see that, 
    $$\{x:x>G^\inv(F^\inv(y))\} \subseteq \{x: G(x)>F^\inv(y)\} \subseteq \{x:F(G(x))>y\}.$$
    Therefore, 
    \begin{align*}
        (F \circ G)^\inv(y) &= \inf \{x:F(G(x))>y\}\\ &\leq \inf\{x:x>G^\inv(F^\inv(y))\} = G^\inv(F^\inv(y)).
    \end{align*} 
\item  If $G$ is strictly increasing, by \eqref{eq8} and  Proposition \ref{basic} 
    \begin{align*}
        \{x:F(G(x))>y\} \subseteq \{x:G(x)\geq F^\inv(y)\} \subseteq \{x:x\geq G^\inv(F^\inv(y))\}.
    \end{align*}
    Therefore, in any case, 
    \begin{align*}
        (F \circ G)^\inv(y) &= \inf \{x:F(G(x))>y\}\\ &\geq \inf\{x:x\geq G^\inv(F^\inv(y))\} = G^\inv(F^\inv(y)),
    \end{align*} 
    Using this and (\ref{comp_leq}) above, (since $F$ and $G$ are non-decreasing) we conclude $(F \circ G)^\inv=G^\inv\circ F^\inv$.
 \end{enumerate}
\end{proof}
 \begin{proposition} \label{geninv&Lebesgue2}
Let $F:\RE\to \RE$ be cumulative distribution of a probability measure $\nu$ on $\RE$. Then, for each $y \in \RE$
$$\mu_{Leb}|_{[0,1]}\Big(\{x\in \RE:F^\inv(x) \leq y\}\Big)
 =\mu_{Leb}|_{[0,1]}\Big(\{x\in\RE: x\leq F(y)\}\Big),$$
 where $\mu_{Leb}$ is the Lebesgue measure. 
\end{proposition}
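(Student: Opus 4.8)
The plan is to prove the identity by squeezing the level set $\{x : F^\inv(x) \le y\}$ between two intervals that differ only by a single point, hence by a Lebesgue-null set, after first computing the right-hand side explicitly.

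First I would dispose of the degenerate value $y=\infty$: here $F^\inv(x)\le\infty$ holds for every $x$, so the left-hand set is all of $\RE$ and meets $[0,1]$ in $[0,1]$, while $F(\infty)=\nu(\RE)=1$ makes the right-hand set meet $[0,1]$ in $[0,1]$ as well, so both measures equal $1$. For the remaining values assume $y\in\R\cup\{-\infty\}$. Since $\nu$ is a probability measure, $F$ takes values in $[0,1]$, so $0\le F(y)\le 1$, and the right-hand side is immediate: $\{x\in\RE : x\le F(y)\}\cap[0,1]=[0,F(y)]$, of measure $F(y)$.

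Next I would establish the two inclusions
$$\{x : x < F(y)\} \subseteq \{x : F^\inv(x) \le y\} \subseteq \{x : x \le F(y)\}.$$
The left inclusion is purely definitional: if $x<F(y)$ then $F(y)>x$, so $y$ lies in $\{t : F(t)>x\}$ and therefore $F^\inv(x)=\inf\{t : F(t)>x\}\le y$. For the right inclusion, suppose $F^\inv(x)\le y$. Because $y<\infty$ we have $F^\inv(x)\in\R\cup\{-\infty\}$, so Proposition \ref{non-dec&rightcts}(i) applies and yields $x\le F(F^\inv(x))$; since $F$ is non-decreasing and $F^\inv(x)\le y$, this gives $x\le F(F^\inv(x))\le F(y)$, as required.

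Finally, intersecting the three sets with $[0,1]$ and using $F(y)\in[0,1]$, the two outer sets become $[0,F(y))$ and $[0,F(y)]$, both of Lebesgue measure $F(y)$. Monotonicity of measure then forces $\mu_{Leb}|_{[0,1]}\big(\{x : F^\inv(x)\le y\}\big)=F(y)$, which is exactly the right-hand side. I expect the only delicate point to be the upper inclusion: one must avoid invoking Proposition \ref{non-dec&rightcts}(i) at $+\infty$, where it fails (as noted in the proposition's own proof), which is precisely why the case $y=\infty$ is separated out first and why right-continuity of $F$ enters only through that single proposition.
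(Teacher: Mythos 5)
Your proof is correct and follows essentially the same route as the paper's: both squeeze the level set $\{x : F^\inv(x)\le y\}$ between $\{x : x<F(y)\}$ and $\{x : x\le F(y)\}$, which differ by a Lebesgue-null set, using Proposition \ref{basic}(i) for the lower inclusion and the $y\le F(F^\inv(y))$ property from Proposition \ref{non-dec&rightcts} for the upper one. The only (immaterial) difference is bookkeeping at infinity: you split off the case $y=\infty$ and invoke part (i) of that proposition, while the paper discards the point $x=\infty$ from the level set and invokes part (iii), which is just the contrapositive form of the same fact.
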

 
 \begin{proof} Since  $F$ is cumulative distribution of a probability measure $\nu$ on $\RE$, it is non-decreasing and right continuous. We then note that 
 \[
  \{x\in \RE:F^\inv(x) \leq y\} \subset \{\infty\} \cup \{x\in \R\cup \{-\infty\}:F^\inv(x) \leq y\} \\
 \]
 Thus, 
 \begin{equation} \label{SetEq}
 \mu_{Leb}|_{[0,1]}\Big(\{x\in \RE:F^\inv(x) \leq y\}\Big)=\mu_{Leb}|_{[0,1]}\Big(\{x\in \R\cup \{-\infty\}:F^\inv(x) \leq y\}\Big).
 \end{equation}
 
 By Proposition \ref {non-dec&rightcts} (iii) we get

\begin{align*}
  \{x\in \R\cup \{-\infty\}:F^\inv(x) \leq y\} &\subset \{x\in \R\cup \{-\infty\}: x\leq F(y)\} \\
 &=\{x\in \R\cup \{-\infty\}: x< F(y)\}\cup  \{x\in \R\cup \{-\infty\}: x= F(y)\}.  
 \end{align*}
 Thus, 
 \begin{equation} \label{Fineq}
 \mu_{Leb}|_{[0,1]}\Big(\{x\in \R\cup \{-\infty\}:F^\inv(x) \leq y\}\Big)\le \mu_{Leb}|_{[0,1]}\Big(\{x \in  \R\cup \{-\infty\}: x< F(y)\}\Big)=F(y)
 \end{equation}
 On the other hand, using Proposition \ref {basic} (i), we have 
 \[
 \{x \in \RE: x< F(y)\}\subset \{x\in \RE: F^\inv(x) \leq y\}
 \]
 Hence, using \eqref {SetEq} 
 \begin{equation} \label{Sineq}
 \begin{split}
 F(y)=\mu_{Leb}|_{[0,1]}\Big(\{x \in \RE: x< F(y)\}\Big)&\le \mu_{Leb}|_{[0,1]}\Big(\{x\in \RE:F^\inv(x)\leq y\}\Big)\\
 &=\mu_{Leb}|_{[0,1]}\Big( \{x\in \R\cup \{-\infty\}:F^\inv(x)\leq y\}\Big)
 \end{split}
 \end{equation}
 and the conclusion of the proposition follows from inequalities \eqref{Fineq} and \eqref{Sineq}.
\end{proof}

\printbibliography

\end{document}